\documentclass[lettersize, journal]{IEEEtran}
\usepackage[utf8]{inputenc} 
\usepackage{cite}
\usepackage{amsmath,amssymb,amsfonts}
\usepackage{amsthm}
\usepackage{mathtools}
\allowdisplaybreaks
\usepackage[ruled,vlined]{algorithm2e}
\usepackage{setspace}
\usepackage{caption}
\usepackage{subcaption}
\usepackage{tikz}
\usepackage{tkz-euclide}
\usepackage[long]{optidef}
\usepackage{bm}
\usepackage{fancyhdr}

\newcommand{\primeone}{^{\prime}}
\newcommand{\primetwo}{^{\prime\prime}}
\newtheorem{proposition}{Proposition}
\newtheorem{theorem}{Theorem}
\newtheorem{lemma}{Lemma}
\newtheorem{corollary}{Corollary}
\newtheorem{remark}{Remark}

\DeclareMathOperator{\sinr}{SINR}

\DeclareMathOperator{\sign}{sign}
\newcommand{\C}{\mathbb{C}}

\newcommand{\E}{\mathbb{E}}

\fancypagestyle{firstpage}{%
  \fancyhf{} % clear header/footer
  \fancyhead[C]{%
    \begin{minipage}{\textwidth}
    \centering
    \scriptsize
    This work has been submitted to the IEEE for possible publication. 
    Copyright may be transferred without notice, after which this version may no longer be accessible.
    \end{minipage}%
  }%
}
\begin{document}

\title{Channel-Estimation-Free Beamforming for RIS-Assisted Systems with Limited Feedback}
\author{\IEEEauthorblockN{Hossein Maleki, \IEEEmembership{Graduate Student Member, IEEE} and Hamid Jafarkhani, \IEEEmembership{Fellow, IEEE} \thanks{The authors are with the University of California, Irvine CA, USA. This work was supported in part by the NSF Awards CNS-2229467 and CCF-2328075. Some of the results in this paper were presented at the IEEE International Conference on Communications (ICC-25) \cite{maleki2025lowcomplexity}.}
}
}

\maketitle
\thispagestyle{firstpage}

\begin{abstract}
In this article, we consider received-power maximization in RIS-assisted wireless links using only limited received signal strength (RSS) feedback, avoiding explicit estimation of the high-dimensional cascaded channel. We propose stochastic and deterministic algorithms for passive and active beamforming. In the deterministic method, each RIS element requires only three fixed-phase RSS measurements to recover the phase, followed by tree-structured quantization. We prove a non-asymptotic received-power guarantee for the fixed-beamformer single-user RIS sweep under cascaded Rayleigh fading and extend the bound to finite-resolution RIS phase shifters. Simulations over Rayleigh and temporally correlated channels show that the proposed methods approach full-CSI alternating optimization. We extend the channel-estimation-free framework to multi-user RIS-assisted transmission by proposing a unified single-bit consensus method for both RIS phase adaptation and transmit beamforming. In this mechanism, all users vote on each shared perturbation using only local SINR comparisons, allowing the update to account for both desired-signal improvement and interference generated to other users. The weighted consensus allows for fairness among users based on their priority.

\end{abstract}

\begin{IEEEkeywords}
Reconfigurable intelligent surface (RIS), channel estimation, limited feedback, received signal strength (RSS), RIS phase adaptation, discrete phase shifts, multi-user downlink.
\end{IEEEkeywords}

\section{Introduction}
The next generation of wireless systems pursues ultra-high data rates and reliability through technologies such as massive MIMO \cite{marzetta2010noncooperative, marzetta2016fundamentals}, mmWave communication \cite{rappaport2015millimeter, jiang2018robust, jiang2019mmwave}, and new multiple-access methods \cite{jafarkhani2024mutiple}. Building on intelligent \cite{subrt2012intelligent, hu2017potential} and reconfigurable \cite{cetiner2004multifunctional, cetiner2008method} surfaces, the reconfigurable intelligent surface (RIS) has emerged as a promising enabler \cite{wu2019towards, li2025tutorial}: an array of low-cost passive elements, each programmable to reflect an incident wave with a chosen phase so that the reflections add constructively at the receiver \cite{basar2019wireless, liu2021reconfigurable, wu2021intelligent, elmossallamy2020reconfigurable}.

The main challenge is to acquire channel state information (CSI). An RIS may contain hundreds or even thousands of passive elements, while the transmitter may employ tens to hundreds of antennas, resulting in an enormous number of channels to estimate and feed back. Most existing RIS designs assume perfect CSI \cite{guo2020weighted, yu2019miso}. However, in practice, CSI must first be estimated and then fed back. Moreover, because RIS elements are passive, only the cascaded transmitter-RIS-receiver channel can be observed (at best), whereas the individual transmitter-RIS and RIS-receiver channels cannot be measured separately. These practical limitations have motivated the development of channel-free optimization methods that completely avoid explicit CSI estimation.

CSI-free distributed beamforming with one-bit receiver feedback was originally developed for distributed single-antenna transmitters \cite{mudumbai2009distributed} and later adapted to RIS-aided links with a single-antenna source and destination through random-rotation and one-bit feedback protocols \cite{psomas2021low}. In contrast, we consider a multi-antenna transmitter, where the receiver feedback must drive both active transmit beamforming and passive RIS phase adaptation. Related RIS designs have studied random RIS rotations \cite{zappone2021intelligent}, max-min signal-to-noise ratio (SNR) optimization \cite{kota2024optimal}, dual-function radar-communication operation \cite{yang2026robust}, SNR and outage analyses \cite{cui2021snr,singh2022optimal}, non-orthogonal multiple access (NOMA) rate loss under limited feedback \cite{almasi2023reconfigurable}, robustness to imperfect CSI \cite{zhang2020robust,chen2022robust}, and sum-rate maximization \cite{jiang2021reconfigurable}. Limited-feedback RIS schemes include learning-based \cite{guo2022deep,jiang2021learning,yu2021convolutional}, sparsity-exploiting \cite{shen2021dimension,shi2022triple}, and compressive-sensing-based \cite{shin2022limited} approaches, as well as analyses of channel-estimation error \cite{chen2021adaptive}.
 The closest prior works differ from ours in one of three ways. First, many RIS beamforming methods compute the RIS phases from explicit channel estimates or CSI feedback. Second, robust and quantized-feedback designs study how imperfect or finite-rate CSI degrades performance, but still rely on a channel estimate or a channel-feedback model. Third, CSI-free one-bit feedback methods avoid explicit channel estimation, but they have mainly been developed for single-input single-output (SISO) links or for RIS phase adaptation alone. In contrast, our method jointly updates the multi-antenna transmit beamformer and the RIS phases using only received signal strength (RSS) feedback. In addition, we provide a mathematical convergence analysis. We further extend the proposed design to RIS systems with discrete phase shifts and to multi-user RIS communication, where fairness coefficients are used to prioritize users.
The contributions of this paper are summarized as follows.
\begin{itemize}
\item We derive a full-CSI alternating-optimization (AO) benchmark with closed-form updates and propose RSS-only algorithms: a single-bit transmit beamforming update, a probabilistic single-bit RIS phase update, and a deterministic geometric RIS phase update. The geometric update recovers the per-element phase correction from RSS measurements.

\item We prove a non-asymptotic guarantee for the geometric RIS update in a setting where the beamformer is fixed, only one RIS phase-adaptation sweep is performed, and the probing phases are independent of the channel. The proof establishes a stable reference direction for the aggregate RIS-assisted signal, bounds the phase error of the geometric estimator, and converts this error into a received-power ratio bound. The final bound maintains the element-wise weighting by channel strength, so the analysis reflects which RIS phase errors matter most.

\item We extend the analysis to practical RIS architectures with discrete phase shifters. The result quantifies how finite-resolution phase control degrades the RSS-only update.

\item We extend the RSS-only framework to multi-user RIS communication, where the RIS update must balance the received-power improvements of multiple users rather than optimize a single link. The proposed rule incorporates fairness coefficients to prioritize users and selects each shared RIS update based on the weighted user feedback. We use a weighted consensus vote to update the phases or beamformers according to feedback bits. 

\item All proposed methods bypass explicit channel estimation, which is especially important in RIS-assisted systems, where thousands of channel coefficients may need to be estimated. Instead, the proposed algorithms perform active transmit beamforming and passive RIS phase adaptation directly from receiver feedback.

\end{itemize}

The remainder of the paper is organized as follows. Section~\ref{SysMod} presents the system model, the full-CSI AO benchmark, and a channel-only upper bound. Section~\ref{Lim_Feed} develops the proposed algorithms. Section~\ref{sec:theory} provides the non-asymptotic analysis. Section~\ref{Results} reports simulations, including noise, overhead, and time variation. Section~\ref{Conclusion} concludes the paper, and the appendices contain the proofs.

\section{System Model}
\label{SysMod}
The system has a transmitter with $N_{\rm T}$ antennas and power $P_{\rm T}$, an RIS with $N$ elements, and a single-antenna receiver (Fig.~\ref{fig:system_model}).
\begin{figure}[t]
    \centering
    \includegraphics[width=8cm, page=1, trim=2cm 4cm 2cm 2cm, clip=true]{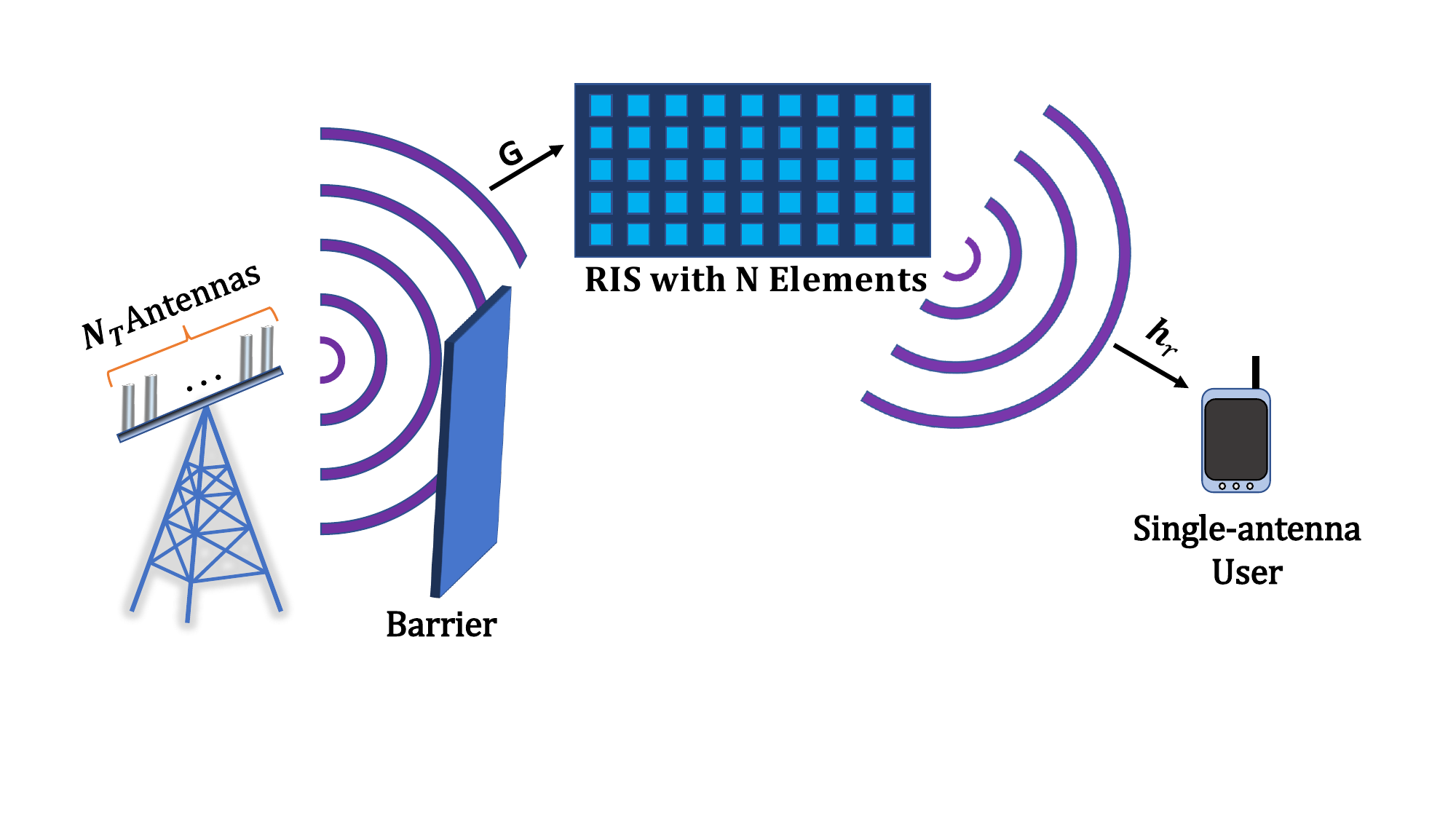}
    \caption{An RIS-assisted system with a multi-antenna transmitter and a single-antenna receiver.}
    \label{fig:system_model}
\end{figure}
The transmitter uses a unit-norm $\boldsymbol{w}\in \mathbb{C}^{N_{\rm T}\times 1}$ to send the unit-energy symbol $s\in \mathbb{C}$. The transmitter-RIS channel is $\boldsymbol{G}\in\mathbb{C}^{N\times N_{\rm T}}$ and the RIS-receiver channel is $\boldsymbol{h}_{\rm r}^{\rm H}\in\mathbb{C}^{1\times N}$, with i.i.d.\ $\mathcal{CN}(0,1)$ entries. The $n$th element shifts its incident phase by $\theta_n$. With the direct link blocked (e.g., due
to high path-loss or heavy shadowing) \cite{kammoun2020asymptotic}, the received signal is
\begin{equation}
    y=\sqrt{P_{\rm T}}\,\boldsymbol{h}_{\rm r}^{\rm H} \boldsymbol{\Theta}\boldsymbol{G}\boldsymbol{w}\,s+n,\quad \boldsymbol{\Theta}=\mathrm{diag}(e^{j\theta_1},\dots,e^{j\theta_N}),
\end{equation}
where $n\sim\mathcal{CN}(0,1)$. The objective is to maximize $\rho=P_{\rm T}|\boldsymbol{h}_{\rm r}^{\rm H}\boldsymbol{\Theta}\boldsymbol{G}\boldsymbol{w}|^2$ subject to $\|\boldsymbol{w}\|=1$ and $|e^{j\theta_n}|=1$. This is a non-convex problem because of the unit-modulus constraints.

\subsection{Alternating-Optimization  Full-CSI Benchmark}
We solve the SNR maximization problem under perfect CSI by alternating optimization and use it as a benchmark. Two subproblems are each solved globally:
\begin{itemize}
\item Beamforming for fixed phases: Defining $\boldsymbol{a}=\boldsymbol{G}^{\rm H}\boldsymbol{\Theta}^{\rm H}\boldsymbol{h}_{\rm r}$, the model is $y=\sqrt{P_{\rm T}}\boldsymbol{a}^{\rm H}\boldsymbol{w}s+n$ and the optimal unit-norm is $\boldsymbol{w}=\boldsymbol{a}/\|\boldsymbol{a}\|$ (maximum-ratio transmission).
\item Phases for fixed beamformer: Defining $\boldsymbol{c}=\boldsymbol{H}_{\rm r}^{\rm H}\boldsymbol{G}\boldsymbol{w}$, with $\boldsymbol{H}_{\rm r} = \mathrm{diag}(\boldsymbol{h}_{\rm r})$ and $c_n=b_ne^{j\beta_n}$, the model is $y = \sqrt{P_{\rm T}}\sum_{n=1}^{N} b_n e^{j(\theta_n+\beta_n)}s + n$, maximized by $\theta_n=-\beta_n$ \cite{jing2009network}.
\end{itemize}
Alternating between these two steps is non-decreasing in SNR and converges to a stationary point. 

\subsection{Cascaded Coefficients and a Channel-Only Upper Bound}
Define the per-element cascaded coefficient (for unit-norm $\boldsymbol{w}$)
\begin{equation}
\label{eqn:zn-def}
z_n \triangleq h_{{\rm r},n}^*\,\boldsymbol{g}_n^{\rm H}\boldsymbol{w},\qquad n=1,\dots,N,
\end{equation}
where $\boldsymbol{g}_n^{\rm H}$ is the $n$th row of $\boldsymbol{G}$, so that $S(\boldsymbol{\theta},\boldsymbol{w})=\sum_n z_n e^{j\theta_n}$ and $\rho=P_{\rm T}|S|^2$. For a fixed $\boldsymbol{w}$, the coherent optimum over phases results in
\begin{equation}
\label{eqn:Sstar}
S^\star(\boldsymbol{w})\triangleq \sum_{n=1}^N |z_n|=\sum_{n=1}^N |h_{{\rm r},n}|\,|\boldsymbol{g}_n^{\rm H}\boldsymbol{w}| .
\end{equation}
To provide a benchmark-independent reference, note that for any unit-norm $\boldsymbol{w}$, $|\boldsymbol{g}_n^{\rm H}\boldsymbol{w}|\le\|\boldsymbol{g}_n\|$ by Cauchy--Schwarz, so the global joint optimum is bounded by the channel-only quantity
\begin{equation}
\label{eqn:UB}
\max_{\|\boldsymbol{w}\|=1}P_{\rm T}\bigl(S^\star(\boldsymbol{w})\bigr)^2 \le U\triangleq P_{\rm T}\Bigl(\sum_{n=1}^N |h_{{\rm r},n}|\|\boldsymbol{g}_n\|\Bigr)^2 .
\end{equation}
Note that the bound $U$ is not necessarily achievable; it is a simple reference to sanity-check, showing that the AO benchmark lies below a valid channel-only upper bound on the true optimum. It can be loose because equality would require a single beamformer to align with every row direction $\boldsymbol{g}_n$ simultaneously. Because the RIS is passive and only the cascaded channel is (noisily) observable, the closed-form full-CSI solution is not directly implementable. The next section develops RSS-only algorithms.

\section{Limited-Feedback Algorithms}
\label{Lim_Feed}
We treat phase adaptation and transmit beamforming separately within an alternating loop. None of the algorithms estimates a channel; all act only on RSS measured at the receiver and fed back over a low-rate link.

\subsection{Probabilistic RIS Phase Adaptation}
\label{sec:prob_phase}
At slot $m=0$ the phases are initialized to an arbitrary $\boldsymbol{\psi}$. In slots $m=1,\dots,M$ a perturbation $\delta_i[m]\sim U[-\Delta,\Delta]$ ($\Delta\in(0,\pi]$) is added for probing. The perturbation is kept if the RSS improves compared to the best RSS so far (single-bit feedback), resulting in
\begin{equation}
    \theta_n[m]=\begin{cases}\theta_n[m-1]+\delta_n[m],&\mathrm{RSS}[m]>\mathrm{RSS}_{\rm best}[m-1],\\\theta_n[m-1],&\text{otherwise.}\end{cases}
\end{equation}
Algorithm~\ref{alg:prob_phase} summarizes the procedure. Since $\mathrm{RSS}_{\rm best}[m]$ is non-decreasing monotone and bounded above by the fixed-$\boldsymbol{w}$ sum $S^\star(\boldsymbol{w})$, the best-RSS sequence always converges. 
\begin{algorithm}
\small
\SetKw{Initialize}{Initialize: }{}{}
\SetKw{Measure}{Measure: }{}{}
\caption{Probabilistic single-bit RIS phase adaptation}\label{alg:prob_phase}
\Initialize{$\boldsymbol{\theta}\leftarrow N\times1$ random, $U[-\pi,\pi]$; $\Delta$; $M$}\\
\Measure{RSS into $\mathrm{RSS}[0]$, $\mathrm{RSS}_{\rm best}[0]$}\\
\For{$m\leftarrow1$ \KwTo $M$}{
$\boldsymbol{\delta}\leftarrow U_{N\times1}[-\Delta,\Delta]$;\\ Transmit with $\boldsymbol{\theta}+\boldsymbol{\delta}$;\\ \Measure{$\mathrm{RSS}[m]$}\\
\eIf{$\mathrm{RSS}[m]>\mathrm{RSS}_{\rm best}[m-1]$}{$\boldsymbol{\theta}\leftarrow\boldsymbol{\theta}+\boldsymbol{\delta}$;\\ $\mathrm{RSS}_{\rm best}[m]\leftarrow\mathrm{RSS}[m]$}{$\mathrm{RSS}_{\rm best}[m]\leftarrow\mathrm{RSS}_{\rm best}[m-1]$}}
\end{algorithm}

\subsection{Deterministic Geometric RIS Phase Adaptation}
\label{sec:TSSQ}
The deterministic algorithm estimates, for each RIS element, the phase offset between that element's reflected signal and the aggregate reflected signal from all other elements. The key geometric fact is that this offset can be recovered from the received powers obtained by applying two known phase perturbations to the element under test.

\begin{proposition}
\label{prop: prop1}
As in Fig.~\ref{fig:vector2}, let $AB$ and $AC$ be non-zero vectors, with lengths $b$ and $a$, respectively. Let $\alpha\in(-\pi,\pi]$ be the angle of $\overrightarrow{AB}$ relative to $\overrightarrow{AC}$. Define $\overrightarrow{AD}=\overrightarrow{AB}+\overrightarrow{AC}$, $\overrightarrow{AD\primeone}=-\overrightarrow{AB}+\overrightarrow{AC}$,
$\overrightarrow{AD\primetwo}=j\overrightarrow{AB}+\overrightarrow{AC}$, corresponding to phase shifts $0$, $\pi$, and $\pi/2$ applied to $AB$. Let $P_0=AD^2$, $P_\pi=AD\primeone{}^2$, $P_{\pi/2}=AD\primetwo{}^2$,
and set $D_1=P_\pi-P_0$, $D_2=P_{\pi/2}-P_0$. Then,
\begin{equation}
    D_1=-4ab\cos\alpha,\qquad
D_2=-2ab(\sin\alpha+\cos\alpha),
\end{equation}
and therefore
\begin{equation}
\label{eqn:atan2}
\alpha=\operatorname{atan2}\bigl(D_1-2D_2,-D_1\bigr).
\end{equation}
Equivalently, when $D_1\neq0$,
\begin{equation}
\label{eqn:prop1}
\tan\alpha=2\frac{D_2}{D_1}-1 .
\end{equation}
Here, $\operatorname{atan2}(y,x)$ denotes the four-quadrant inverse tangent, i.e., the unique angle $\phi\in(-\pi,\pi]$ whose cosine and sine have the signs and ratio specified by the Cartesian coordinates $(x,y)$. Thus, \eqref{eqn:atan2} computes the angle of the vector $(-D_1,\,D_1-2D_2)$. It also remains valid when $D_1=0$ because no division by $D_1$ is required.
\end{proposition}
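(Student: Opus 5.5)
The plan is to identify the plane with $\mathbb{C}$ and reduce every quantity to a closed-form expression in $a$, $b$, and $\alpha$. Since only the relative angle matters and squared lengths are invariant under rotation, I will rotate coordinates so that $\overrightarrow{AC}$ lies on the positive real axis. I then represent $\overrightarrow{AC}=a$ and $\overrightarrow{AB}=b e^{j\alpha}$, with $a,b>0$ because both vectors are non-zero. Multiplication by $-1$ and by $j$ then realizes the phase shifts $\pi$ and $\pi/2$ applied to $AB$, so
\begin{equation*}
\overrightarrow{AD}=a+be^{j\alpha},\quad \overrightarrow{AD\primeone}=a-be^{j\alpha},\quad \overrightarrow{AD\primetwo}=a+jbe^{j\alpha}.
\end{equation*}

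Next I expand each squared modulus with $|u+v|^2=|u|^2+|v|^2+2\,\mathrm{Re}(u^*v)$. This gives $P_0=a^2+b^2+2ab\cos\alpha$ and $P_\pi=a^2+b^2-2ab\cos\alpha$. Using $\mathrm{Re}(je^{j\alpha})=-\sin\alpha$, it also gives $P_{\pi/2}=a^2+b^2-2ab\sin\alpha$. Subtracting $P_0$ cancels the common term $a^2+b^2$ and yields the claimed forms $D_1=-4ab\cos\alpha$ and $D_2=-2ab(\sin\alpha+\cos\alpha)$.

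For \eqref{eqn:atan2}, I form the linear combinations
\begin{equation*}
-D_1=4ab\cos\alpha,\qquad D_1-2D_2=4ab\sin\alpha .
\end{equation*}
Thus $(-D_1,\,D_1-2D_2)=4ab(\cos\alpha,\sin\alpha)$, which is a strictly positive multiple of the unit vector at angle $\alpha$ because $4ab>0$. This vector is never zero, and by the definition of $\operatorname{atan2}$ its four-quadrant angle in $(-\pi,\pi]$ is exactly $\alpha$. No division by $D_1$ occurs, so the formula holds even when $D_1=0$, which is the case $\alpha=\pm\pi/2$.

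When $D_1\neq0$, dividing the two coordinates gives $\tan\alpha=(D_1-2D_2)/(-D_1)=2D_2/D_1-1$, which is \eqref{eqn:prop1}. The only step needing care is the sign bookkeeping for the $\pi/2$ term, namely that $\mathrm{Re}(je^{j\alpha})=-\sin\alpha$ rather than $+\sin\alpha$. I also need to state explicitly that the positivity of $a$ and $b$ is what lets $\operatorname{atan2}$ resolve the quadrant, so that $\alpha$ is recovered uniquely rather than only modulo $\pi$.
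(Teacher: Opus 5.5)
Your proof is correct and follows essentially the same route as the paper: you compute $P_0$, $P_\pi$, $P_{\pi/2}$ explicitly (via the complex expansion $|u+v|^2=|u|^2+|v|^2+2\,\mathrm{Re}(u^*v)$ rather than the paper's law-of-cosines phrasing) and identify $(-D_1,\,D_1-2D_2)=4ab(\cos\alpha,\sin\alpha)$. You then use $4ab>0$ to read off $\alpha$ via $\operatorname{atan2}$, dividing by $D_1$ only when $D_1\neq0$.
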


\begin{proof}
By the law of cosines, $P_0=a^2+b^2+2ab\cos\alpha$,
while the $\pi$-shifted vector is $-\overrightarrow{AB}$, so $P_\pi=a^2+b^2-2ab\cos\alpha$.
Similarly, the $\pi/2$-shifted vector gives $P_{\pi/2}=a^2+b^2-2ab\sin\alpha$.
Thus, $D_1=P_\pi-P_0=-4ab\cos\alpha$, and $D_2=P_{\pi/2}-P_0=-2ab(\sin\alpha+\cos\alpha)$.
Hence, $4ab(\cos\alpha,\sin\alpha)=(-D_1,\,D_1-2D_2)$.
Since $4ab>0$, the angle of this vector is exactly $\alpha$, which proves \eqref{eqn:atan2}. If $D_1\neq0$, dividing the sine component by the cosine component results in
\begin{equation}
    \tan\alpha=\frac{D_1-2D_2}{-D_1} =2\frac{D_2}{D_1}-1,
\end{equation}
which proves \eqref{eqn:prop1}.
\end{proof}

\setlength{\unitlength}{1cm}
\begin{figure}[t]
\centering
\begin{tikzpicture}
  \draw[->, color=blue] (0,0) coordinate (A) node[anchor=east, color=black, yshift=-0.5mm] {$A$} -- (10:4) node[above, color=black, xshift=-0.5mm] {$C$} coordinate (C);
  \draw[->, color=red] (0,0) -- (-30:2) node[right, color=black, xshift=-2mm, yshift=-2mm] {$B$} coordinate (B);
  \draw[->] (0,0) -- (150:2) node[above, color=black, xshift=-1mm, yshift=-1mm] {$B'$} coordinate (B2);
  \draw[->, color=cyan] (0,0) -- (60:2) node[above, color=black] {$B''$} coordinate (B3);
  \draw[->, dashed, color=red] (C) -- +(-30:2) node[right, color=black, xshift=-1mm] {$D$} coordinate (D);
  \draw[dashed] (C) -- +(150:2) node[right, color=black, yshift=1mm] {$D'$} coordinate (D2);
  \draw[->, dashed, color=cyan] (C) -- +(60:2) node[right, color=black, xshift=-1mm] {$D''$} coordinate (D3);
  \draw[dashed, color=red] (B) -- (D);
  \draw[dashed] (B2) -- (D2);
  \draw[dashed, color=cyan] (B3) -- (D3);
  \draw[->, color=red] (0,0) -- (D);
  \draw[->] (0,0) -- (D2);
  \draw[->, color=cyan] (0,0) -- (D3);
  \draw[color=olive, line width=0.5mm] (-30:7mm) node[right=1pt, color=black, yshift=1mm] {$\alpha$} arc (-30:10:7mm);
  \draw[color=violet, line width=0.5mm] (-30:5mm) arc (-30:150:5mm) node[right=1.5pt, midway, above, color=black, xshift=-2mm] {$\beta$};
  \tkzMarkRightAngle[draw=black,size=.3](B3,A,B);
  \tkzLabelAngle[xshift=10mm, yshift=0mm](B3,A,B){$\gamma$};
\end{tikzpicture}
\caption{Geometric construction used to estimate the angle between one RIS element's contribution and the aggregate contribution of the remaining elements.}
\label{fig:vector2}
\end{figure}

In the RIS algorithm, $AB$ represents the contribution $z_ne^{j\theta_n}$ of Element $n$, and $AC$ represents the leave-one-out aggregate $A_n=\sum_{m\neq n}z_me^{j\theta_m}$. The receiver measures the three powers $P_0$, $P_\pi$, and $P_{\pi/2}$ by keeping all other RIS phases fixed and applying phase shifts $\pi$ and $\pi/2$ only to Element $n$. Then, Eq.~\eqref{eqn:atan2} gives the phase offset $\alpha_n$ between Element $n$ and the aggregate. To align Element $n$ with the aggregate, the RIS updates its phase by $-\alpha_n$, after quantizing it using a tree-structured scalar quantizer (TSSQ)~\cite{gray1998quantization}.

Algorithm~\ref{alg:TSSQ2} uses the proposed method for phase adaptation in the RIS. Note that the leave-one-out reference is frozen during a sweep (no phase is committed until all $\alpha_n$ are computed). The angles are probed once and then refined by $R$ feedback uses. 
\begin{algorithm}
\small
\SetKw{Initialize}{Initialize: }{}{}
\SetKw{Measure}{Measure: }{}{}
\caption{Deterministic geometric + TSSQ phase adaptation}\label{alg:TSSQ2}
\Initialize{$\boldsymbol{\theta}\leftarrow N\times1$ random, $U[-\pi,\pi]$; feedback uses $R$; bits/use $B$}\\
Transmit with $\boldsymbol{\theta}^{\rm TX}\leftarrow\boldsymbol{\theta}$; \Measure{$P_0=\mathrm{RSS}^2$}\\
\For{$n\leftarrow1$ \KwTo $N$}{
$\boldsymbol{\theta}^{\rm TX}\leftarrow\boldsymbol{\theta}$; $\theta^{\rm TX}_n\leftarrow\theta_n+\pi$; Transmit; \\ \Measure{$P_\pi[n]=\mathrm{RSS}^2$}\\
$\boldsymbol{\theta}^{\rm TX}\leftarrow\boldsymbol{\theta}$; $\theta^{\rm TX}_n\leftarrow\theta_n+\pi/2$; Transmit;\\ \Measure{$P_{\pi/2}[n]=\mathrm{RSS}^2$}\\
$D_1[n]\leftarrow P_\pi[n]-P_0$;\\ $D_2[n]\leftarrow P_{\pi/2}[n]-P_0$\\
$\alpha_n\leftarrow\operatorname{atan2}(D_1[n]-2D_2[n],\,-D_1[n])$
}
\For{$k\leftarrow1$ \KwTo $R$}{
\For{$n\leftarrow1$ \KwTo $N$}{Feed back the next $B$ bits of the TSSQ codeword for $\alpha_n$}
}
$\hat{\boldsymbol{\alpha}}\leftarrow$ reconstruction after $L=RB$ total bits/element\\
$\boldsymbol{\theta}\leftarrow\boldsymbol{\theta}-\hat{\boldsymbol{\alpha}}$
\end{algorithm}

\subsection{Tree-Structured Scalar Quantizer}
\label{sec:TSSQ-spec}
Since in general the single scalar angle $\alpha_n\in(-\pi,\pi]$ is uniform \cite{maleki2025lowcomplexity}, we use a uniform scalar quantizer.
%Because the geometric stage reduces the feedback to a single scalar angle $\alpha_n\in(-\pi,\pi]$ per element, 
Such a quantizer is successively refinable \cite{gray1998quantization} and we will have the following properties. %This also clarifies the sense in which we bypass codebook design.
\begin{itemize}
\item \textbf{Source dimension:} The quantized source is a scalar phase $\alpha_n$, not a channel vector.
\item \textbf{Codebook tree:} 
As shown in Fig.~\ref{fig:TSSQ}, the quantizer can be implemented using a tree-structured progressive encoder.
The codebook is a universal binary tree with depth $L$,  obtained by successive bisection of all possible values between $-\pi$ and $\pi$. Every layer of the tree divides its intervals at their midpoints. %The root is $(-\pi,\pi]$ and each node is split at its midpoint. 
A leaf at depth $\ell$ is an interval of width $2\pi/2^{\ell}$ and a  reconstruction at its midpoint. It is fixed a priori, independent of the channel distribution. Practical RIS elements use low-resolution phase shifters that restrict each element to a finite phase alphabet $\Phi_b=\{-\pi+(2m+1)\pi /2^b\}_{m=0}^{2^b-1}$ (spacing $2\pi/2^b$) \cite{tang2023transmissive}. We call the elements of $\Phi_b$ grid points. If we assume a discrete-phase RIS, we can use the same procedure as continuous-phase RIS by using the alphabet values as the encoder reconstruction points. The TSSQ reconstruction levels (its leaves) are chosen to be exactly the phases that the hardware can realize, so its $b$ decisions terminate at a leaf $\widehat\theta_n\in\Phi_b$ %that the element applies directly, 
using $b$ feedback bits per element.
\item \textbf{Progressive feedback:} With $B$ bits per feedback use and $R$ feedback uses, the element receives $L=RB$ bits, and the reconstruction error obeys the bound $|\hat\alpha_n-\alpha_n|\le\pi/2^{L}$ because the depth of the tree is $L$. Note that $B$ may be as small as one bit per use.
\item \textbf{Storage/complexity:} For each RIS element, the encoder and decoder only need to track the current interval endpoints for that element while its feedback bits are being generated or decoded. Thus, the per-element working memory is $O(1)$, the per-bit encoding/decoding cost is $O(1)$. %, and no channel-dependent codebook or lookup table is stored.
\end{itemize}
This method avoids channel-dependent limited-feedback codebook design (the costly part for high-dimensional cascaded channels). 
\begin{figure}[t]
    \centering
    \includegraphics[width=8.5cm, trim=2cm 3cm 2cm 1cm, clip=true]{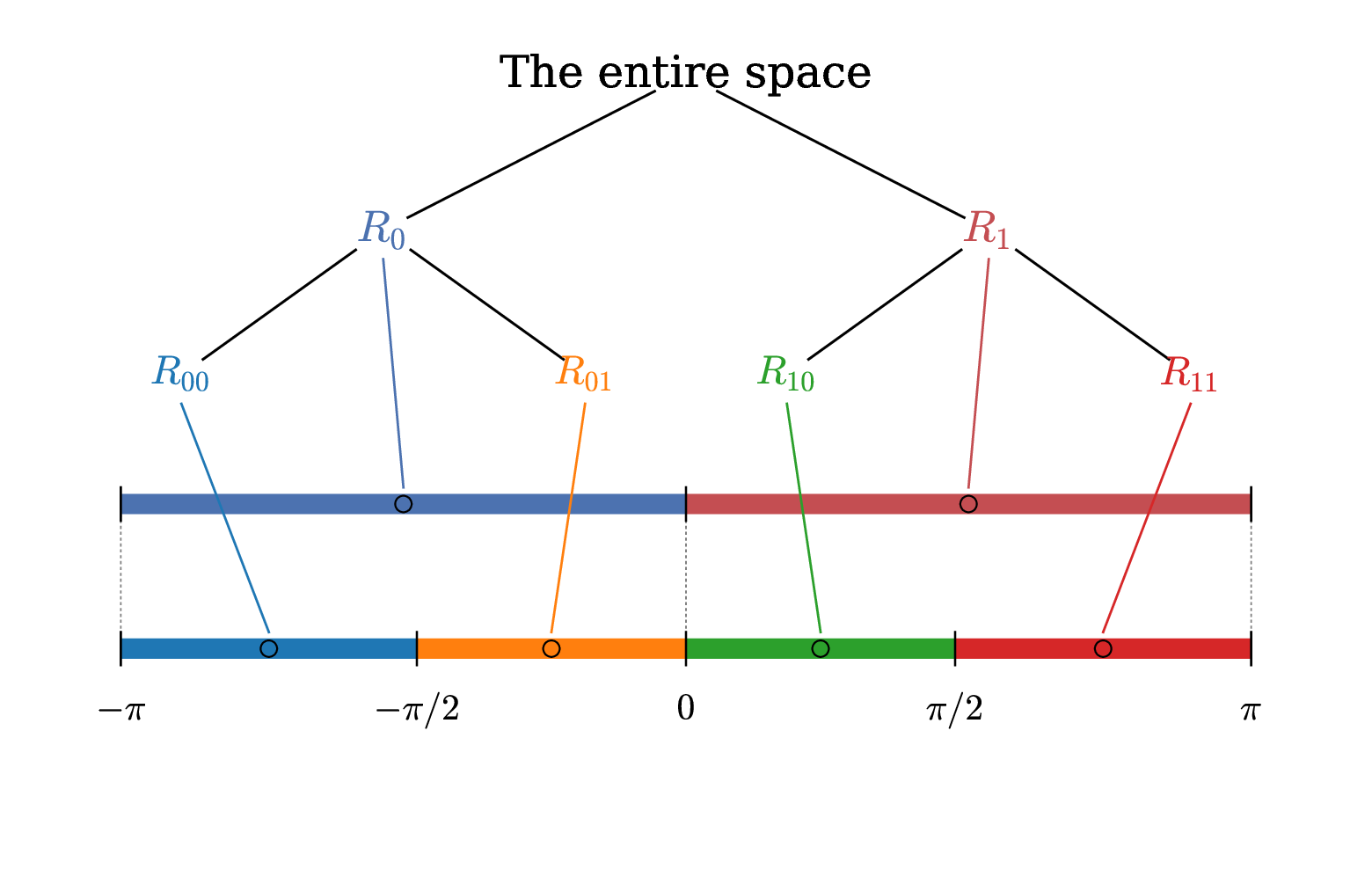}
    \caption{A tree-structured scalar quantizer}
    \label{fig:TSSQ}
\end{figure}

\subsection{Single-Bit Feedback Beamforming}
\label{sec:beamforming}
For fixed phases, the system is the multiple-input single-output (MISO) model $y=\sqrt{P_{\rm T}}\boldsymbol{a}^{\rm H}\boldsymbol{w}s+n$, so we adapt the gradient-sign update of \cite{banister2003simple, spall1992multivariate}. We probe $\boldsymbol{w}\pm\beta\boldsymbol{p}$ with $\boldsymbol{p}\sim\mathcal{CN}(0,2\boldsymbol{I})$ over two slots and keep the unit-norm winner (one feedback bit per iteration). We repeat this procedure for $J$ rounds.
Algorithm~\ref{alg:beamforming} provides the details.
\begin{algorithm}
\small
\SetKw{Initialize}{Initialize: }{}{}
\SetKw{Measure}{Measure: }{}{}
\caption{Single-bit feedback beamforming}\label{alg:beamforming}
\Initialize{$\boldsymbol{w}=\frac{1}{\sqrt{N_{\rm T}}}[1,\dots,1]$; $\beta$; $J$}\\
\For{$j\leftarrow1$ \KwTo $J$}{
$\boldsymbol{p}\leftarrow\mathcal{CN}_{N_{\rm T}\times1}(0,2\boldsymbol{I})$;
$\boldsymbol{w}_e=\tfrac{\boldsymbol{w}+\beta\boldsymbol{p}}{\|\boldsymbol{w}+\beta\boldsymbol{p}\|}$;
$\boldsymbol{w}_o=\tfrac{\boldsymbol{w}-\beta\boldsymbol{p}}{\|\boldsymbol{w}-\beta\boldsymbol{p}\|}$\\
transmit $\boldsymbol{w}_{\rm e},\boldsymbol{w}_{\rm o}$; \eIf{$\mathrm{RSS}_{\rm e}>\mathrm{RSS}_{\rm o}$}{$\boldsymbol{w}=\boldsymbol{w}_{\rm e}$}{$\boldsymbol{w}=\boldsymbol{w}_{\rm o}$}}
\end{algorithm}

\section{Non-Asymptotic Phase-Alignment Analysis}
\label{sec:theory}
This section presents the theoretical performance analysis of the deterministic RIS phase-alignment step. The analysis is carried out for a fixed transmit beamformer and for one RIS measurement/update sweep. This setting isolates the passive phase-alignment mechanism and provides the analytical building block for the alternating design. The fully adaptive joint algorithm, in which both the beamformer and RIS phases are updated from past RSS feedback, is evaluated numerically in Section~\ref{Results}.

\subsection{Setting}
For a fixed, deterministic unit-norm $\boldsymbol{w}$, the coefficients $z_n=h_{\rm{r},n}^*\boldsymbol{g}_n^{\rm H}\boldsymbol{w}$ satisfy $z_n=X_nY_n$ with $X_n,Y_n\stackrel{\mathrm{i.i.d.}}{\sim}\mathcal{CN}(0,1)$, independent across $n$ (since $h_{r,n}\sim\mathcal{CN}(0,1)$ and, for unit-norm $\boldsymbol{w}$, $\boldsymbol{g}_n^{\rm H}\boldsymbol{w}\sim\mathcal{CN}(0,1)$).
The RIS phase vector $\boldsymbol{\theta}$ used during a measurement sweep is deterministic or independent of $\{z_n\}$. This holds at initialization, or for one sweep conditioned on a phase vector chosen independently of the current channel.

The above assumptions define the scope of the analysis. During the RIS sweep analyzed here, the transmit beamformer $\boldsymbol{w}$ is fixed, and the probing phase vector $\boldsymbol{\theta}$ is either deterministic or chosen independently of the current channel realization. Under this condition, the cascaded coefficients $\{z_n\}$ remain independent across RIS elements, which is the probabilistic structure used in the proofs.

%This is different from the full joint adaptive algorithm. In that algorithm, previous RSS feedback is used to update both $\boldsymbol{w}$ and $\boldsymbol{\theta}$. Consequently, at later iterations, these variables are functions of the same channel realization that generates the coefficients $\{z_n\}$, so the independence assumptions used in the one-sweep proof no longer apply directly. Proving convergence in that fully adaptive setting would require a separate analysis conditioned on the feedback history, for example through a martingale argument.
%Therefore, the results in this section are guarantees for one deterministic RIS phase-alignment sweep with fixed $\boldsymbol{w}$. 

First, we need to prove a number of lemmas. Let $S=\sum_m z_m e^{j\theta_m}$ and $A_n=S-z_n e^{j\theta_n}$.
\begin{lemma}
\label{lem:tail}
Let $Z=XY$, where $X,Y\sim\mathcal{CN}(0,1)$ are independent. Then, for every $t>0$,
\begin{equation}
\Pr(|Z|\ge t)\le 2e^{-t}.
\end{equation}
\end{lemma}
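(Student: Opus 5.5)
The plan is to condition on $|X|$ and use the exponential tail of $|Y|^2$. Since $Y\sim\mathcal{CN}(0,1)$, the variable $|Y|^2$ is exponential with mean one. Because $X$ and $Y$ are independent, for any fixed $x\neq 0$ we have
\begin{equation*}
\Pr\bigl(|Z|\ge t \,\big|\, |X|=x\bigr)=\Pr\bigl(|Y|^2\ge t^2/x^2\bigr)=e^{-t^2/x^2}.
\end{equation*}
Averaging over $R\triangleq|X|^2$, which is also $\mathrm{Exp}(1)$, gives the exact expression
\begin{equation*}
\Pr(|Z|\ge t)=\int_0^\infty e^{-r}\,e^{-t^2/r}\,dr .
\end{equation*}
This equals $2t\,K_1(2t)$, but we do not need that closed form.

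To bound the integral, apply AM--GM to the exponent: $r+t^2/r\ge 2t$ for all $r>0$. This alone is not integrable, so we split the exponent in half:
\begin{equation*}
r+\frac{t^2}{r}=\frac{1}{2}\Bigl(r+\frac{t^2}{r}\Bigr)+\frac{1}{2}\Bigl(r+\frac{t^2}{r}\Bigr)\ge t+\frac{r}{2}.
\end{equation*}
Here the first half is bounded below by $t$ via AM--GM, and in the second half we drop the nonnegative term $t^2/(2r)$. Substituting, the integral is at most
\begin{equation*}
e^{-t}\int_0^\infty e^{-r/2}\,dr=2e^{-t},
\end{equation*}
which is exactly the claimed bound.

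The only delicate point is choosing the split so that the constant comes out as exactly $2$ and the rate exactly $e^{-t}$. Applying AM--GM to the full exponent loses integrability, while an unequal split changes either the rate or the constant. As an alternative route, one could use a union bound: $\{|X||Y|\ge t\}\subseteq\{|X|\ge\sqrt t\}\cup\{|Y|\ge\sqrt t\}$, and each event has probability $e^{-t}$. This gives $2e^{-t}$ immediately and is even simpler. I would present the union-bound argument as the main proof and mention the integral as a sharper check.
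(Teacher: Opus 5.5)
Your proposal is correct, and the argument you designate as the main proof (the inclusion $\{|X||Y|\ge t\}\subseteq\{|X|\ge\sqrt t\}\cup\{|Y|\ge\sqrt t\}$ followed by the union bound with $\Pr(|X|\ge\sqrt t)=\Pr(|Y|\ge\sqrt t)=e^{-t}$) is exactly the paper's proof. Your conditioning route, which bounds the exact integral $\int_0^\infty e^{-r-t^2/r}\,dr$ using $r+t^2/r\ge t+r/2$, is also valid and gives the same constant $2e^{-t}$, but the paper does not need it.
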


\begin{proof}
    See Appendix~\ref{App:A}.
\end{proof}

\begin{lemma}
\label{lem:smallball}
Let $W_M=\sum_{m=1}^M Z_m e^{j\phi_m}$, where $Z_m=X_mY_m$, $X_m,Y_m\stackrel{\mathrm{i.i.d.}}{\sim}\mathcal{CN}(0,1)$ are mutually independent, and the phases $\phi_m$ are deterministic. Then, for every $M\ge2$, $u\in\mathbb{C}$, and $r>0$,
\begin{equation}
\label{eqn:smallball}
\Pr\left(|W_M-u|\le r\right)\le \frac{2r^2}{M}.
\end{equation}
\end{lemma}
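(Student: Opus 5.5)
The plan is to condition on the first factors $\{X_m\}$. Conditionally, $W_M$ becomes a circularly symmetric complex Gaussian, and the small-ball probability of a Gaussian is controlled by its density.

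\textbf{Conditional Gaussianity.} Fix $X_1,\dots,X_M$. Since the $Y_m$ are i.i.d.\ $\mathcal{CN}(0,1)$ and $\phi_m$ is deterministic, circular symmetry gives $Y_m e^{j\phi_m}\sim\mathcal{CN}(0,1)$. Hence $X_mY_me^{j\phi_m}\mid X_m\sim\mathcal{CN}(0,|X_m|^2)$. By independence across $m$,
\begin{equation*}
W_M\mid\{X_m\}\sim\mathcal{CN}(0,V),\qquad V\triangleq\sum_{m=1}^M|X_m|^2 .
\end{equation*}
The density of $\mathcal{CN}(0,V)$ on $\mathbb{C}\cong\mathbb{R}^2$ is $\frac{1}{\pi V}e^{-|w|^2/V}\le\frac{1}{\pi V}$. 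The disc $\{|w-u|\le r\}$ has area $\pi r^2$, so for every $u\in\mathbb{C}$,
\begin{equation*}
\Pr\left(|W_M-u|\le r\mid\{X_m\}\right)\le\frac{r^2}{V}.
\end{equation*}
This bound is uniform in the center $u$, which is why no assumption on $u$ is needed.

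\textbf{Averaging over $V$.} Each $|X_m|^2$ is $\mathrm{Exp}(1)$ and the terms are independent, so $V\sim\mathrm{Gamma}(M,1)$. A direct integral gives
\begin{equation*}
\E[1/V]=\frac{\Gamma(M-1)}{\Gamma(M)}=\frac{1}{M-1},
\end{equation*}
which is finite precisely when $M\ge2$. This is where the hypothesis $M\ge 2$ enters; for $M=1$, $\E[1/V]$ diverges and the argument fails. Taking expectations yields $\Pr(|W_M-u|\le r)\le r^2/(M-1)$.

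\textbf{Conclusion and main point of care.} Finally, $\frac{1}{M-1}\le\frac{2}{M}$ is equivalent to $M\le 2M-2$, i.e.\ $M\ge2$, which gives \eqref{eqn:smallball}. The only step needing care is computing the inverse Gamma moment and checking that it is finite. One could instead bound $\E[\min(1,r^2/V)]$ to cover $M=1$, but this is not needed for the statement as given.
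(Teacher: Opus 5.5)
Your proof is correct, and it takes a genuinely different route from the paper's. The paper works in the Fourier domain. It computes the characteristic function of a single product, $\varphi_Z(t)=(1+|t|^2/4)^{-1}$, conditioning on $X$ only for this one-term computation. It then raises this to the $M$th power for $U_M=W_M/\sqrt M$ and bounds the density by Fourier inversion, $\|f_{U_M}\|_\infty\le(2\pi)^{-2}\int_{\mathbb{R}^2}(1+|t|^2/(4M))^{-M}\,dt=M/(\pi(M-1))$. There, $M\ge2$ enters as the integrability of the characteristic function.

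You instead condition on all of $X_1,\dots,X_M$ at once, which exhibits $W_M$ as a Gaussian scale mixture $\mathcal{CN}(0,V)$ with $V\sim\mathrm{Gamma}(M,1)$. The density bound is then immediate, and $M\ge2$ enters as the finiteness of $\E[1/V]=1/(M-1)$.

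Both arguments reach the same intermediate bound $r^2/(M-1)$ before relaxing to $2r^2/M$. This is no coincidence: since $\varphi_{U_M}\ge0$, the inversion bound equals the density at the origin, which after rescaling is exactly your $\E[1/(\pi V)]$. So neither bound is sharper. Yours is shorter and avoids the inversion theorem, the integrability check, and the polar-coordinate integral. The paper's characteristic-function template is the one that would extend to summands without a convenient Gaussian-mixture representation.

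Your closing aside is fine, but note that for $M=1$ no bound of the form $Cr^2$ can hold, since $\Pr(|Z_1|\le r)\sim 2r^2\log(1/r)$ as $r\to0$. The $\min(1,r^2/V)$ variant therefore necessarily gives a logarithmically weaker bound there.
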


\begin{proof}
See Appendix~\ref{App:B}.
\end{proof}

% \begin{lemma}
% \label{lem:quant}
% The depth-$L$ successive-bisection quantizer of Section~\ref{sec:TSSQ-spec} satisfies $|\hat\alpha-\alpha|\le\pi/2^{L}$ for all $\alpha\in(-\pi,\pi]$.
% \end{lemma}
% \begin{proof}
% Let us define the root interval $I_0=(-\pi,\pi]$, of width $|I_0|=2\pi$. At each level, encoding selects the half of the current interval containing $\alpha$, so the level-$\ell$ interval $I_\ell$ satisfies $\alpha\in I_\ell$ and $|I_\ell|=|I_{\ell-1}|/2$; hence, $|I_L|=2\pi/2^L$. The reconstruction $\hat\alpha$ is the midpoint of $I_L$, and since $\alpha\in I_L$, $|\hat\alpha-\alpha|\le|I_L|/2=\pi/2^L$.
% \end{proof}

The next elementary lemma quantifies how much a planar vector's angle can move under an additive perturbation.

\begin{lemma}
\label{lem:angle}
Let $\boldsymbol q\in\mathbb{R}^2\setminus\{0\}$ and
$\boldsymbol e\in\mathbb{R}^2$ with $\|\boldsymbol e\|<\|\boldsymbol q\|$.
Then,
\begin{equation}
\label{eqn:angle-pert}
\left|\angle(\boldsymbol q+\boldsymbol e)-\angle\boldsymbol q\right|
\le
\arcsin\left(\frac{\|\boldsymbol e\|}{\|\boldsymbol q\|}\right)
\le
\frac{\pi}{2}\frac{\|\boldsymbol e\|}{\|\boldsymbol q\|}.
\end{equation}
\end{lemma}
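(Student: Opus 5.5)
Since angles are invariant under positive scaling, we may normalize $\|\boldsymbol q\|=1$ and rotate coordinates so that $\boldsymbol q=(1,0)$, i.e., $\angle\boldsymbol q=0$. Write $\varepsilon=\|\boldsymbol e\|<1$. The perturbed vector $\boldsymbol q+\boldsymbol e$ then lies in the closed disk of radius $\varepsilon$ centered at $(1,0)$. Because $\varepsilon<1$, this disk does not contain the origin, so $\boldsymbol q+\boldsymbol e\neq 0$ and its angle is well defined. The disk also lies in the open right half-plane, so the angle lies in $(-\pi/2,\pi/2)$ and no wrap-around at $\pm\pi$ occurs. The quantity to bound is therefore simply $|\phi|$, where $\phi=\angle(\boldsymbol q+\boldsymbol e)$.

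For the first inequality I would use the tangent-line argument. Among points of a disk of radius $\varepsilon$ whose center is at distance $1$ from the origin, the largest angular deviation from the center direction is attained on a tangent line from the origin. The tangency point, the origin, and the center form a right triangle with hypotenuse $1$ and opposite leg $\varepsilon$, so the maximal deviation is $\arcsin\varepsilon$.

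To make this analytic, let $\boldsymbol v=\boldsymbol q+\boldsymbol e$. The distance from the fixed point $\boldsymbol q$ to the ray at angle $\phi$ equals $|\sin\phi|$ when $|\phi|\le\pi/2$, since $\boldsymbol q$ is a unit vector. This distance is at most $\|\boldsymbol q-\boldsymbol v\|=\varepsilon$, because $\boldsymbol v$ lies on that ray. Hence $|\sin\phi|\le\varepsilon$, and since $|\phi|<\pi/2$ and $\arcsin$ is increasing, $|\phi|\le\arcsin\varepsilon$.

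The second inequality is the concavity bound $\arcsin x\le \frac{\pi}{2}x$ on $[0,1]$. The function $\arcsin$ is convex on $[0,1]$, and the chord joining $(0,0)$ to $(1,\pi/2)$ is the line $\frac{\pi}{2}x$. Convexity places the graph below this chord.

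The step needing the most care is the reduction to $|\phi|<\pi/2$, together with the claim that the angle difference is measured without a $2\pi$ ambiguity. Both follow from the observation that the disk stays in the right half-plane, so $\cos\phi>0$. Once that is in place, the remaining steps are routine.
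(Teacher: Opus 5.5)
Your proof is correct and follows the same route as the paper. Both rotate (and scale) so that $\boldsymbol q$ lies on the positive real axis, bound the angular spread of the disk of radius $\|\boldsymbol e\|$ by the tangent-line angle $\arcsin(\|\boldsymbol e\|/\|\boldsymbol q\|)$, and finish with the chord bound from convexity of $\arcsin$ on $[0,1]$. Your distance-to-the-ray step ($|\sin\phi|\le\varepsilon$ with $|\phi|<\pi/2$) is a welcome analytic justification of the tangent-line claim that the paper states only geometrically; the one slip is that you call the final inequality a ``concavity'' bound in passing, when your argument correctly uses convexity.
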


\begin{proof}
The angle difference is invariant under rotation. Therefore, without loss of generality, we assume that $\boldsymbol q=(a,0)$, where $a=\|\boldsymbol q\|$. Let
$\rho=\|\boldsymbol e\|$. Then, $\boldsymbol q+\boldsymbol e$ lies in the disk
of radius $\rho$ centered at $(a,0)$. Since $\rho<a$, this disk does not contain
the origin.

The largest possible angular deviation from the positive real axis occurs along
a tangent line from the origin to this disk. If $\psi$ denotes this tangent
angle, then the radius to the tangent point is perpendicular to the tangent
line and we have
\begin{equation}
\sin\psi=\frac{\rho}{a}.
\end{equation}
Therefore, every point in the disk has an angular deviation at most
$\psi=\arcsin(\rho/a)$, proving the first inequality. The second inequality follows from the convexity of $\arcsin(t)$ on $[0,1]$. Indeed,
$\arcsin(0)=0$ and $\arcsin(1)=\pi/2$, and since a convex function lies below the
chord that connects two points on its graph, we have
\begin{equation}
\arcsin(t)\le (1-t)\arcsin(0)+t\arcsin(1)
=\frac{\pi}{2}t,\quad 0\le t\le1 .
\end{equation}
\end{proof}

\subsection{Finite-$N$ Aggregate Dominance and Direction Stability}
\begin{theorem}
\label{thm:dominance}
Let $N \ge 2$ and define $c_\delta \triangleq \frac{1}{2}\sqrt{\delta}$.
Then, with probability at least $1-\delta$,
\begin{equation}
\label{eqn:dom1}
    \min_n |A_n|
    \ge
    c_\delta \sqrt{N}
    -
    \log\left(\frac{4N}{\delta}\right),
    \qquad
    \max_n |z_n|
    \le
    \log\left(\frac{4N}{\delta}\right).
\end{equation}
If $c_\delta \sqrt{N}\ge 2\log\left(\frac{4N}{\delta}\right)$, then with probability at least $1-\delta$,
\begin{equation}
        \min_n |A_n|
    \ge
    \frac{c_\delta}{2}\sqrt{N},
    \qquad
    \frac{\min_n |A_n|}{\max_n |z_n|}
    \ge
    \frac{c_\delta \sqrt{N}}
         {2\log\left(4N/\delta\right)}.
\end{equation}

\end{theorem}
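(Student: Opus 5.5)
The plan is to split the failure probability $\delta$ evenly between two events, one controlling the per-element magnitudes and one controlling the aggregate, and then to pass from the aggregate to every leave-one-out sum via the triangle inequality.

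\emph{Per-element magnitudes.} Let $E_1=\{\max_n|z_n|\le \log(4N/\delta)\}$. Each $z_n e^{j\theta_n}$ has the same modulus as $z_n=X_nY_n$, so Lemma~\ref{lem:tail} applies directly. With $t=\log(4N/\delta)$ it gives $\Pr(|z_n|> t)\le 2e^{-t}=\delta/(2N)$. A union bound over the $N$ elements then yields $\Pr(E_1^c)\le \delta/2$.

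\emph{Aggregate.} Let $E_2=\{|S|> c_\delta\sqrt N\}$. Since $\boldsymbol\theta$ is deterministic, or independent of $\{z_n\}$ and handled by conditioning on it, Lemma~\ref{lem:smallball} applies with $M=N\ge2$, $u=0$ and $r=c_\delta\sqrt N$. This gives
\begin{equation*}
\Pr(E_2^c)=\Pr\bigl(|S|\le c_\delta\sqrt N\bigr)\le \frac{2c_\delta^2 N}{N}=2\cdot\frac{\delta}{4}=\frac{\delta}{2}.
\end{equation*}
The constant $c_\delta=\tfrac12\sqrt\delta$ is chosen exactly so that this probability equals $\delta/2$.

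\emph{Leave-one-out sums.} On $E_1\cap E_2$, which has probability at least $1-\delta$, the reverse triangle inequality gives for every $n$
\begin{equation*}
|A_n|=|S-z_ne^{j\theta_n}|\ge |S|-|z_n|\ge c_\delta\sqrt N-\log(4N/\delta).
\end{equation*}
The bound $\max_n|z_n|\le\log(4N/\delta)$ holds on $E_1$, so the first claim follows. Bounding $A_n$ through the single random variable $S$ avoids applying the small-ball lemma $N$ times, which would cost an extra factor $N$ in the probability. This is the one step where a naive approach would fail, so the reduction to $S$ is the crux.

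\emph{Second claim.} Assume $c_\delta\sqrt N\ge2\log(4N/\delta)$. Then $\log(4N/\delta)\le \tfrac{c_\delta}{2}\sqrt N$, so on the same event
\begin{equation*}
\min_n|A_n|\ge c_\delta\sqrt N-\tfrac{c_\delta}{2}\sqrt N=\tfrac{c_\delta}{2}\sqrt N.
\end{equation*}
Since $\log(4N/\delta)>0$ for $N\ge2$ and $\delta\le1$, we may divide by the upper bound on $\max_n|z_n|$. If some $z_n=0$ the ratio is $+\infty$ and the bound holds trivially. Otherwise dividing gives
\begin{equation*}
\frac{\min_n|A_n|}{\max_n|z_n|}\ge \frac{c_\delta\sqrt N}{2\log(4N/\delta)}.
\end{equation*}
No step here is hard; the only care needed is the conditioning on $\boldsymbol\theta$ when invoking Lemma~\ref{lem:smallball}.
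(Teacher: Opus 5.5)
Your proposal is correct and follows the paper's proof step for step. It uses the same even split of $\delta$ between the small-ball event for $S$ (Lemma~\ref{lem:smallball} with $M=N$, $u=0$, $r=c_\delta\sqrt N$) and the union-bounded tail event for $\max_n|z_n|$ (Lemma~\ref{lem:tail} with $t=\log(4N/\delta)$), then the reverse triangle inequality $|A_n|\ge|S|-|z_n|$ and the same arithmetic for the second claim. One small slip: the ratio $\min_n|A_n|/\max_n|z_n|$ is infinite only when \emph{all} $z_n=0$, not when some $z_n=0$, and that is a probability-zero event that does not affect the argument.
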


\begin{proof}
    See Appendix~\ref{App:C}.
\end{proof}
\begin{theorem}
\label{thm:direction}
Under Theorem~\ref{thm:dominance}, if $c_\delta\sqrt N\ge3\log(4N/\delta)$, then with probability at least $1-\delta$,
\begin{equation}
\label{eqn:dir}
\Gamma_N\triangleq\max_n|\angle S-\angle A_n|\le\frac{4}{c_\delta}\frac{\log(4N/\delta)}{\sqrt N}=O\Bigl(\frac{\log N}{\sqrt N}\Bigr).
\end{equation}
\end{theorem}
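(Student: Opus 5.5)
The plan is to work on the single high-probability event $\mathcal{E}$ supplied by Theorem~\ref{thm:dominance}. On $\mathcal{E}$ we have $\min_n|A_n|\ge c_\delta\sqrt N-\log(4N/\delta)$ and $\max_n|z_n|\le\log(4N/\delta)$, and $\Pr(\mathcal{E})\ge1-\delta$. No further probabilistic argument should be needed: the claim becomes a deterministic consequence of these two bounds combined with Lemma~\ref{lem:angle}. Throughout, write $L_\delta\triangleq\log(4N/\delta)$.

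\textbf{Step 1 (perturbation form).} For each $n$, write $S=A_n+z_ne^{j\theta_n}$ and view this as a planar vector $\boldsymbol q=A_n$ perturbed by $\boldsymbol e=z_ne^{j\theta_n}$, with $\|\boldsymbol e\|=|z_n|$.

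\textbf{Step 2 (checking the hypothesis of Lemma~\ref{lem:angle}).} Under the strengthened assumption $c_\delta\sqrt N\ge 3L_\delta$, the event $\mathcal{E}$ gives $|A_n|\ge c_\delta\sqrt N-L_\delta\ge 2L_\delta$. In particular, $|A_n|>L_\delta\ge|z_n|$, so $\|\boldsymbol e\|<\|\boldsymbol q\|$ and $A_n\neq0$. A degenerate case to watch is $L_\delta=0$; this cannot occur, because $4N/\delta>1$ forces $L_\delta>0$. The lower bound also gives $c_\delta\sqrt N-L_\delta\ge\tfrac{2}{3}c_\delta\sqrt N$, since $L_\delta\le c_\delta\sqrt N/3$.

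\textbf{Step 3 (angle bound).} Lemma~\ref{lem:angle} then yields, for every $n$,
\[
|\angle S-\angle A_n|\le\frac{\pi}{2}\frac{|z_n|}{|A_n|}\le\frac{\pi}{2}\cdot\frac{L_\delta}{\tfrac{2}{3}c_\delta\sqrt N}=\frac{3\pi}{4}\frac{L_\delta}{c_\delta\sqrt N}.
\]
Because $3\pi/4<4$, taking the maximum over $n$ gives \eqref{eqn:dir}. The $O(\log N/\sqrt N)$ form follows directly for fixed $\delta$. If a tighter constant were wanted, one could instead use the intermediate arcsin bound, but it is unnecessary here.

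The main subtlety is interpreting the angle difference modulo $2\pi$. Lemma~\ref{lem:angle} bounds the principal angular deviation, so I would state $\Gamma_N$ as the wrapped difference in $(-\pi,\pi]$, which is what the lemma controls. Since the resulting bound is below $\pi/2$, no ambiguity arises.

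The other point is to make sure the statement uses a single event of probability $1-\delta$ that holds uniformly over all $n$. This is already built into Theorem~\ref{thm:dominance}, which controls the $\min$ and $\max$ over $n$ simultaneously, so no additional union bound is required.
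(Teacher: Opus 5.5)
Your proposal is correct and follows essentially the same route as the paper. Both work on the event $\mathcal{E}_1\cap\mathcal{E}_2$ of Theorem~\ref{thm:dominance}, treat $S=A_n+z_ne^{j\theta_n}$ as a perturbation of $A_n$, and apply Lemma~\ref{lem:angle} using $|A_n|\ge\tfrac{2}{3}c_\delta\sqrt N$ and $|z_n|\le\log(4N/\delta)$. The only difference is cosmetic: you keep the constant $3\pi/4$, while the paper loosens $\tfrac{2}{3}c_\delta\sqrt N$ to $\tfrac{c_\delta}{2}\sqrt N$ and obtains $\pi$, and both constants are below $4$.
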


\begin{proof}
    See Appendix~\ref{App:D}.
\end{proof}

\subsection{Estimator Stability and One-Sweep Bound}
\begin{theorem}
\label{thm:lipschitz}
Let $\alpha^\star = F(d_1^\star,d_2^\star)$, where $F(d_1,d_2)=\operatorname{atan2}(d_1-2d_2,-d_1)$, and define
\begin{equation}
R^\star
\triangleq
\sqrt{(d_1^\star)^2+(d_1^\star-2d_2^\star)^2}
=
4|A_n|\,|z_n|.
\end{equation}
For measured values $\widehat d_i=d_i^\star+\Delta_i$, define $\widehat\alpha=F(\widehat d_1,\widehat d_2)$. If $|\Delta_1|+|\Delta_2|\le R^\star/4$, then
\begin{equation}
\label{eqn:lip}
|\widehat\alpha-\alpha^\star|
\le
\frac{4}{R^\star}\bigl(|\Delta_1|+|\Delta_2|\bigr).
\end{equation}

Now suppose the measurement errors satisfy the tail bound $\Pr(|\Delta_i|>t)\le 2\exp(-t^2/(2\sigma^2))$ for $i=1,2$ and all $t>0$. Equivalently, the errors are sub-Gaussian with variance proxy $\sigma^2$, meaning that their tails are no heavier than those of a zero-mean Gaussian random variable with variance $\sigma^2$. Then, for any $x\in(0,\pi]$,
\begin{equation}
\Pr\left(|\widehat\alpha-\alpha^\star|>x\right)
\le
4\exp\left(-\frac{x^2(R^\star)^2}{128\sigma^2}\right)
+
4\exp\left(-\frac{(R^\star)^2}{128\sigma^2}\right).
\end{equation}
\end{theorem}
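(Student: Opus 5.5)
The plan is to treat the estimator as the angle of a planar vector and apply Lemma~\ref{lem:angle}, then convert the deterministic bound into a tail bound with a union bound.

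\textbf{Deterministic part.} First I identify $F(d_1,d_2)$ as $\angle\boldsymbol q$ with $\boldsymbol q=(-d_1,\,d_1-2d_2)\in\mathbb{R}^2$, exactly as in the proof of Proposition~\ref{prop: prop1}. By that proposition, with $a=|A_n|$ and $b=|z_n|$, the noiseless vector is $\boldsymbol q^\star=4|A_n||z_n|(\cos\alpha^\star,\sin\alpha^\star)$. Hence $\|\boldsymbol q^\star\|=R^\star=4|A_n||z_n|$, which is the stated identity.

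The measured vector is $\widehat{\boldsymbol q}=\boldsymbol q^\star+\boldsymbol e$ with $\boldsymbol e=(-\Delta_1,\,\Delta_1-2\Delta_2)$. By the triangle inequality,
\[
\|\boldsymbol e\|\le|\Delta_1|+|\Delta_1-2\Delta_2|\le 2\bigl(|\Delta_1|+|\Delta_2|\bigr).
\]
Under the hypothesis $|\Delta_1|+|\Delta_2|\le R^\star/4$ this gives $\|\boldsymbol e\|\le R^\star/2<\|\boldsymbol q^\star\|$, so Lemma~\ref{lem:angle} applies. It yields
\[
|\widehat\alpha-\alpha^\star|\le\frac{\pi}{2}\cdot\frac{2(|\Delta_1|+|\Delta_2|)}{R^\star}=\frac{\pi}{R^\star}\bigl(|\Delta_1|+|\Delta_2|\bigr)\le\frac{4}{R^\star}\bigl(|\Delta_1|+|\Delta_2|\bigr).
\]
The angle difference here is understood modulo $2\pi$, i.e., as the angular distance. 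Lemma~\ref{lem:angle} controls exactly this quantity, and it is the relevant one for the phase update.

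\textbf{Probabilistic part.} Fix $x\in(0,\pi]$ and let $\Sigma=|\Delta_1|+|\Delta_2|$. If $|\widehat\alpha-\alpha^\star|>x$, then one of two things holds. Either the hypothesis of the deterministic part fails, so $\Sigma>R^\star/4$. Or it holds, and then \eqref{eqn:lip} forces $\Sigma>xR^\star/4$. A union bound therefore gives
\[
\Pr\bigl(|\widehat\alpha-\alpha^\star|>x\bigr)\le\Pr\bigl(\Sigma>R^\star/4\bigr)+\Pr\bigl(\Sigma>xR^\star/4\bigr).
\]
For any level $s>0$, the event $\Sigma>s$ forces $|\Delta_1|>s/2$ or $|\Delta_2|>s/2$. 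The assumed sub-Gaussian tails then give
\[
\Pr(\Sigma>s)\le 4\exp\bigl(-s^2/(8\sigma^2)\bigr).
\]
Substituting $s=xR^\star/4$ and $s=R^\star/4$ produces $4\exp\bigl(-x^2(R^\star)^2/(128\sigma^2)\bigr)+4\exp\bigl(-(R^\star)^2/(128\sigma^2)\bigr)$, which is the claimed bound. The restriction $x\le\pi$ is only needed so that the statement is non-vacuous.

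\textbf{Main obstacle.} I expect the only delicate point to be the first step: verifying that the norm of $\boldsymbol q^\star$ equals $4|A_n||z_n|$ independently of $\alpha^\star$, and that the linear map $(\Delta_1,\Delta_2)\mapsto\boldsymbol e$ has operator norm small enough for the constant $4$ to absorb $\pi$. Both follow directly from the coordinate formula in Proposition~\ref{prop: prop1}. Everything after that is a routine application of Lemma~\ref{lem:angle} and a union bound, with no randomness assumptions on $\{z_n\}$ needed because $R^\star$ is treated as fixed.
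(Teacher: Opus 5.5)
Your proposal is correct and follows the paper's proof step for step. Both use the same vector $\boldsymbol q=(-d_1,\,d_1-2d_2)$ with $\|\boldsymbol q^\star\|=4|A_n||z_n|$, the same perturbation bound $2(|\Delta_1|+|\Delta_2|)$, Lemma~\ref{lem:angle} with $\pi\le 4$ absorbing the constant, and the same split into the escape event $\{\Sigma>R^\star/4\}$ and the small-perturbation event $\{\Sigma>xR^\star/4\}$, each bounded by the half-threshold union bound. Your remark that the angle difference should be read as angular distance modulo $2\pi$ is a small clarification that the paper leaves implicit.
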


\begin{proof}
     See Appendix~\ref{App:E}.
\end{proof}

\begin{theorem}
\label{thm:onesweep}
Fix $\boldsymbol{w}$ and let $\widehat{\boldsymbol\theta}$ be the RIS phases after one alignment sweep. Define the residual phase error of Element $n$ relative to the common coherent direction $\bar\phi$ by
\begin{equation}
\xi_n
\triangleq
\bigl(\angle z_n+\widehat\theta_n\bigr)-\bar\phi
\pmod{2\pi},
\qquad
\xi_n\in(-\pi,\pi].
\end{equation}
Let
\begin{equation}
\eta_n
\triangleq
\varepsilon_n+\Delta_L+\Gamma_N,
\end{equation}
where $\varepsilon_n$ is the geometric-estimation error, $\Delta_L$ is the quantization error, and $\Gamma_N$ is the reference-direction error. If $|\xi_n|\le \eta_n$ for every $n$, then
\begin{equation}
\label{eqn:onesweep0}
|S(\widehat{\boldsymbol\theta},\boldsymbol{w})|
\ge
\sum_{n=1}^N |z_n|\cos\xi_n .
\end{equation}
Moreover,
\begin{equation}
\label{eqn:onesweep}
|S(\widehat{\boldsymbol\theta},\boldsymbol{w})|
\ge
S^\star(\boldsymbol{w})
-
\frac{1}{2}
\sum_{n=1}^N |z_n|\eta_n^2 .
\end{equation}
\end{theorem}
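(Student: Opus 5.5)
The plan is to project the aggregate sum onto the common coherent direction $e^{j\bar\phi}$ and then use a second-order cosine bound; the hypotheses $|\xi_n|\le\eta_n$ enter only in that last step.

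\textbf{Step 1 (projection bound).} By construction of $\xi_n$,
\begin{equation*}
S(\widehat{\boldsymbol\theta},\boldsymbol w)=\sum_n |z_n|e^{j(\angle z_n+\widehat\theta_n)}=e^{j\bar\phi}\sum_n |z_n|e^{j\xi_n},
\end{equation*}
since $\xi_n$ differs from $\angle z_n+\widehat\theta_n-\bar\phi$ by a multiple of $2\pi$. For any complex number $v$ we have $|v|\ge \operatorname{Re}(v)$. Applying this to $v=e^{-j\bar\phi}S$ gives
\begin{equation*}
|S|=\Bigl|\sum_n |z_n|e^{j\xi_n}\Bigr|\ge \sum_n |z_n|\cos\xi_n,
\end{equation*}
which is \eqref{eqn:onesweep0}. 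This step needs no assumption on $\xi_n$ at all: the hypothesis $|\xi_n|\le\eta_n$ is not used for \eqref{eqn:onesweep0} itself.

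\textbf{Step 2 (per-element cosine bound).} Use the elementary inequality $\cos x\ge 1-x^2/2$, valid for all real $x$. One way to see it: $g(x)=\cos x-1+x^2/2$ is even, $g(0)=0$, and $g'(x)=x-\sin x\ge0$ for $x\ge0$. Since $|\xi_n|\le\eta_n$, we get $\xi_n^2\le\eta_n^2$, so
\begin{equation*}
\cos\xi_n\ge 1-\xi_n^2/2\ge 1-\eta_n^2/2.
\end{equation*}
This holds regardless of whether $\eta_n$ exceeds $\pi$, because $\eta_n\ge|\xi_n|\ge0$ is all that is needed.

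\textbf{Step 3 (summation).} Multiply each inequality by $|z_n|\ge0$ and sum. This gives
\begin{equation*}
\sum_n|z_n|\cos\xi_n\ge\sum_n|z_n|-\tfrac12\sum_n|z_n|\eta_n^2.
\end{equation*}
Since $S^\star(\boldsymbol w)=\sum_n|z_n|$ by \eqref{eqn:Sstar}, combining with Step 1 yields \eqref{eqn:onesweep}.

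\textbf{Main obstacle.} The only subtle point is interpretive: the theorem takes $|\xi_n|\le\eta_n$ as a hypothesis. It does not ask us to verify that decomposition, which would require a triangle inequality on angles modulo $2\pi$ combining Theorems~\ref{thm:direction} and \ref{thm:lipschitz} with the TSSQ bound. So the proof stays purely deterministic. The one thing to check carefully is the wrap-around: $\xi_n$ is reduced into $(-\pi,\pi]$, but the representation in Step 1 is unaffected by adding multiples of $2\pi$.
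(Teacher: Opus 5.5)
Your proposal is correct and follows essentially the same route as the paper. Both arguments factor out $e^{j\bar\phi}$, lower-bound the modulus by the real part to get $\sum_n |z_n|\cos\xi_n$, then apply $\cos t\ge 1-t^2/2$ together with $\xi_n^2\le\eta_n^2$ and sum using $S^\star(\boldsymbol{w})=\sum_n|z_n|$. Your side remarks are also accurate: the hypothesis $|\xi_n|\le\eta_n$ is not needed for the first inequality, and the $2\pi$ wrap-around in $\xi_n$ is harmless.
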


\begin{proof}
By definition of $\xi_n$, the $n$th aligned contribution can be written as
$z_ne^{j\widehat\theta_n}=|z_n|e^{j(\bar\phi+\xi_n)}$. Therefore,
\begin{equation}
S(\widehat{\boldsymbol\theta},\boldsymbol{w})
=
\sum_{n=1}^N z_ne^{j\widehat\theta_n}
=
e^{j\bar\phi}
\sum_{n=1}^N |z_n|e^{j\xi_n}.
\end{equation}
Since multiplication by $e^{j\bar\phi}$ is a global rotation, it does not change the magnitude. Hence,
\begin{equation}
\begin{aligned}
|S(\widehat{\boldsymbol\theta},\boldsymbol{w})|
=
\left|
\sum_{n=1}^N |z_n|e^{j\xi_n}
\right|
&\ge
\operatorname{Re}
\left\{
\sum_{n=1}^N |z_n|e^{j\xi_n}
\right\}
\\
&=
\sum_{n=1}^N |z_n|\cos\xi_n.
\end{aligned}
\end{equation}
Next, using the inequality $\cos t\ge 1-t^2/2$ for all real $t$, we obtain $\cos\xi_n
\ge
1-\frac{\xi_n^2}{2}$.
Since $|\xi_n|\le \eta_n$, we have $\xi_n^2\le \eta_n^2$, and therefore
\begin{equation}
\cos\xi_n
\ge
1-\frac{\eta_n^2}{2}.
\end{equation}
Substituting this bound into \eqref{eqn:onesweep0} gives
\begin{equation}
\begin{aligned}
|S(\widehat{\boldsymbol\theta},\boldsymbol{w})|
&\ge
\sum_{n=1}^N |z_n|
\left(
1-\frac{\eta_n^2}{2}
\right)
\\
&=
\sum_{n=1}^N |z_n|
-
\frac{1}{2}
\sum_{n=1}^N |z_n|\eta_n^2 .
\end{aligned}
\end{equation}
Finally, since $S^\star(\boldsymbol{w})=\sum_{n=1}^N |z_n|$, we obtain
\begin{equation}
|S(\widehat{\boldsymbol\theta},\boldsymbol{w})|
\ge
S^\star(\boldsymbol{w})
-
\frac{1}{2}
\sum_{n=1}^N |z_n|\eta_n^2 ,
\end{equation}
which proves \eqref{eqn:onesweep}.
\end{proof}

\begin{remark}
Theorem~\ref{thm:onesweep} converts the elementwise phase-error bounds into a received-field guarantee after one alignment sweep. The bound holds on the event where the geometric-estimation, quantization, and reference-direction error bounds all hold, so that $|\xi_n|\le\eta_n=\varepsilon_n+\Delta_L+\Gamma_N$ for every element. On this event, the received magnitude is lower bounded by the ideal coherent sum minus a quadratic penalty in the residual phase-error bounds, showing that small phase errors cause only second-order loss in the received field.
\end{remark}

\subsection{Convergence of the Received-Power Ratio}

\begin{theorem}
\label{thm:ratio}
Fix $\boldsymbol{w}$ and $\delta\in(0,1)$. For each element with
$R_n^\star=4|A_n||z_n|>0$, define the measurement-induced phase-error envelope as
\begin{equation}
\label{eqn:eta-meas}
\eta_n^{\mathrm{meas}}
\triangleq
\begin{cases}
\dfrac{4\bigl(|\Delta_{1,n}|+|\Delta_{2,n}|\bigr)}{R_n^\star},
&
|\Delta_{1,n}|+|\Delta_{2,n}|\le R_n^\star/4,
\\[1ex]
\pi,
&
\text{otherwise}.
\end{cases}
\end{equation}
For the degenerate case $R_n^\star=0$, set $\eta_n^{\mathrm{meas}}=\pi$.

Let $\mathcal E_{\mathrm{dir}}$ denote the direction-stability event from
Theorem~\ref{thm:direction}. Under the sufficient condition of
Theorem~\ref{thm:direction}, this event holds with probability at least
$1-\delta$ over the channel. On $\mathcal E_{\mathrm{dir}}$, with
$\Delta_L=\pi/2^L$ and $\Gamma_N$ defined in \eqref{eqn:dir}, the normalized
received magnitude satisfies
\begin{equation}
\label{eqn:weighted-gap}
\frac{|S(\widehat{\boldsymbol\theta},\boldsymbol{w})|}
     {S^\star(\boldsymbol{w})}
\ge
1
-
\frac{1}{2S^\star(\boldsymbol{w})}
\sum_{n=1}^N
|z_n|
\bigl(\eta_n^{\mathrm{meas}}+\Delta_L+\Gamma_N\bigr)^2 .
\end{equation}
Consequently, if
\begin{equation}
\label{eqn:weighted-condition}
\frac{\sum_{n=1}^N |z_n|(\eta_n^{\mathrm{meas}})^2}
     {S^\star(\boldsymbol{w})}
\xrightarrow{\mathbb P}
0,
\qquad
\Delta_L\to0,
\qquad
\Gamma_N\to0,
\end{equation}
then
\begin{equation}
\frac{|S(\widehat{\boldsymbol\theta},\boldsymbol{w})|}
     {S^\star(\boldsymbol{w})}
\xrightarrow{\mathbb P}
1,
\end{equation}
and
\begin{equation}
\frac{\rho(\widehat{\boldsymbol\theta},\boldsymbol{w})}
     {\rho^\star_{\mathrm{RIS}}(\boldsymbol{w})}
\xrightarrow{\mathbb P}
1,
\end{equation}
where
$\rho(\widehat{\boldsymbol\theta},\boldsymbol{w})
\triangleq P_{\rm T}|S(\widehat{\boldsymbol\theta},\boldsymbol{w})|^2$
and
$\rho^\star_{\mathrm{RIS}}(\boldsymbol{w})
\triangleq P_{\rm T}(S^\star(\boldsymbol{w}))^2$.
\end{theorem}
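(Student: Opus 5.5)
The plan is to reduce the statement to Theorem~\ref{thm:onesweep}. The task is to verify, on $\mathcal E_{\mathrm{dir}}$, that every element satisfies $|\xi_n|\le\eta_n^{\mathrm{meas}}+\Delta_L+\Gamma_N$.

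Take the common coherent direction $\bar\phi\triangleq\angle S$, where $S$ is the pre-sweep aggregate with the frozen phases. For element $n$, the ideal correction is $\alpha_n^\star=\angle(z_ne^{j\theta_n})-\angle A_n$, which by Proposition~\ref{prop: prop1} equals $F(d_{1,n}^\star,d_{2,n}^\star)$. The updated phase is $\widehat\theta_n=\theta_n-\hat\alpha_n$, where $\hat\alpha_n$ is the TSSQ reconstruction of $\widehat\alpha_n=F(\widehat d_{1,n},\widehat d_{2,n})$. Modulo $2\pi$, this gives
\begin{equation*}
\xi_n=\angle z_n+\theta_n-\hat\alpha_n-\angle S
=(\alpha_n^\star-\widehat\alpha_n)+(\widehat\alpha_n-\hat\alpha_n)+(\angle A_n-\angle S).
\end{equation*}
The principal representative of a sum of three angles has absolute value at most the sum of their absolute values. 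Hence $|\xi_n|\le|\widehat\alpha_n-\alpha_n^\star|+|\hat\alpha_n-\widehat\alpha_n|+|\angle A_n-\angle S|$, with each difference read as a wrapped angle.

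The three terms are bounded in turn.
\begin{itemize}
\item The second term is at most $\Delta_L=\pi/2^L$ by the TSSQ property of Section~\ref{sec:TSSQ-spec}.
\item The third term is at most $\Gamma_N$ on $\mathcal E_{\mathrm{dir}}$ by Theorem~\ref{thm:direction}.
\item For the first term there are two cases. If $R_n^\star>0$ and $|\Delta_{1,n}|+|\Delta_{2,n}|\le R_n^\star/4$, the deterministic Lipschitz part \eqref{eqn:lip} of Theorem~\ref{thm:lipschitz} gives $|\widehat\alpha_n-\alpha_n^\star|\le\eta_n^{\mathrm{meas}}$. Otherwise, including the degenerate case $R_n^\star=0$, any wrapped angle difference is at most $\pi=\eta_n^{\mathrm{meas}}$.
\end{itemize}
So $|\xi_n|\le\eta_n$ for every $n$, and \eqref{eqn:onesweep} holds. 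Dividing by $S^\star(\boldsymbol w)>0$, which holds almost surely, yields \eqref{eqn:weighted-gap}. The measurement part of the argument is deterministic, so the probability is governed only by $\mathcal E_{\mathrm{dir}}$.

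For the convergence claim, expand the square with $(a+b+c)^2\le3(a^2+b^2+c^2)$. Using $\sum_n|z_n|=S^\star$, the penalty term is at most
\begin{equation*}
\frac{3}{2}\frac{\sum_n|z_n|(\eta_n^{\mathrm{meas}})^2}{S^\star}+\frac32\Delta_L^2+\frac32\Gamma_N^2.
\end{equation*}
Under \eqref{eqn:weighted-condition} this tends to zero in probability. Since $\Pr(\mathcal E_{\mathrm{dir}}^c)\le\delta$, for any $\epsilon>0$ we get $\limsup\Pr(|S|/S^\star<1-\epsilon)\le\delta$. Combined with $|S|\le S^\star$ by the triangle inequality, this gives the ratio convergence.

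The main obstacle is that $\delta$ is fixed, so a single $\mathcal E_{\mathrm{dir}}$ only yields a bound of $\delta$ on the failure probability. I would handle this by taking $\delta=\delta_N\to0$ slowly, for example $\delta_N=1/\log N$. Then $c_{\delta_N}\sqrt N\ge3\log(4N/\delta_N)$ still holds for large $N$, and the bound \eqref{eqn:dir} with $c_{\delta_N}$ still tends to zero, consistent with the hypothesis $\Gamma_N\to0$. The failure probability therefore vanishes. Finally, the power ratio equals $(|S|/S^\star)^2$, and squaring is continuous, so it also converges to $1$ in probability by the continuous mapping theorem.
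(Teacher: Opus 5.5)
Your proposal is correct and follows the paper's proof. You use the same three-way split of $\xi_n$ into estimation, quantization and reference-direction errors, substitute into Theorem~\ref{thm:onesweep}, and then apply $(a+b+c)^2\le3(a^2+b^2+c^2)$ and continuous mapping; beyond that, you make explicit three steps the paper passes over: the wrapped-angle triangle inequality, the upper bound $|S|\le S^\star$, and the fact that $\mathcal E_{\mathrm{dir}}$ fails with probability up to $\delta$ (your $\delta_N\to0$ handles this, as does noting that $|S|/S^\star$ does not depend on $\delta$, so $\limsup\le\delta$ for every $\delta$ forces the limit to be $0$).
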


\begin{proof}
The residual phase error $\xi_n$ in Theorem~\ref{thm:onesweep} is bounded by three contributions: the measurement-induced geometric-estimation error $|\widehat\alpha_n-\alpha_n^\star|$, the quantization error, and the reference-direction error. The quantization error is at most $\Delta_L=\pi/2^L$, and by Theorem~\ref{thm:direction}, the reference-direction error is at most $\Gamma_N$ on the direction-stability event $\mathcal E_{\mathrm{dir}}$.

For the geometric-estimation error, Theorem~\ref{thm:lipschitz} gives
$|\widehat\alpha_n-\alpha_n^\star|
\le
4(|\Delta_{1,n}|+|\Delta_{2,n}|)/R_n^\star$
whenever $R_n^\star>0$ and
$|\Delta_{1,n}|+|\Delta_{2,n}|\le R_n^\star/4$. By the definition of
$\eta_n^{\mathrm{meas}}$ in \eqref{eqn:eta-meas}, this implies
$|\widehat\alpha_n-\alpha_n^\star|\le \eta_n^{\mathrm{meas}}$ on this event.
Outside this event, or in the degenerate case $R_n^\star=0$, we set
$\eta_n^{\mathrm{meas}}=\pi$, and the trivial bound
$|\widehat\alpha_n-\alpha_n^\star|\le \pi$ holds because phase errors are measured as principal angle differences in $(-\pi,\pi]$. Hence, in all cases,
$|\widehat\alpha_n-\alpha_n^\star|\le \eta_n^{\mathrm{meas}}$.

Therefore, by the triangle inequality, on $\mathcal E_{\mathrm{dir}}$,
\begin{equation}
|\xi_n|
\le
\eta_n^{\mathrm{meas}}+\Delta_L+\Gamma_N .
\end{equation}
Substituting this elementwise bound into Theorem~\ref{thm:onesweep} and dividing by
$S^\star(\boldsymbol{w})=\sum_{n=1}^N |z_n|$ gives
\begin{equation}
\frac{|S(\widehat{\boldsymbol\theta},\boldsymbol{w})|}
     {S^\star(\boldsymbol{w})}
\ge
1
-
\frac{1}{2S^\star(\boldsymbol{w})}
\sum_{n=1}^N
|z_n|
\bigl(\eta_n^{\mathrm{meas}}+\Delta_L+\Gamma_N\bigr)^2 ,
\end{equation}
which proves \eqref{eqn:weighted-gap}.

It remains to prove the convergence statement. Using
$(a+b+c)^2\le 3(a^2+b^2+c^2)$, we obtain
\begin{equation}
\begin{aligned}
\frac{
\sum_{n=1}^N
|z_n|
\bigl(\eta_n^{\mathrm{meas}}+\Delta_L+\Gamma_N\bigr)^2
}
{2S^\star(\boldsymbol{w})}
&\le
\frac{3}{2}
\frac{
\sum_{n=1}^N |z_n|(\eta_n^{\mathrm{meas}})^2
}
{S^\star(\boldsymbol{w})}
\\
&\quad
+
\frac{3}{2}\Delta_L^2
+
\frac{3}{2}\Gamma_N^2 .
\end{aligned}
\end{equation}
Here, we used
$\sum_{n=1}^N |z_n|\Delta_L^2=\Delta_L^2 S^\star(\boldsymbol{w})$
and
$\sum_{n=1}^N |z_n|\Gamma_N^2=\Gamma_N^2 S^\star(\boldsymbol{w})$.
Under \eqref{eqn:weighted-condition}, together with $\Delta_L\to0$ and
$\Gamma_N\to0$, the right-hand side converges to zero in probability. Therefore,
\eqref{eqn:weighted-gap} implies
\begin{equation}
\frac{|S(\widehat{\boldsymbol\theta},\boldsymbol{w})|}
     {S^\star(\boldsymbol{w})}
\xrightarrow{\mathbb P}
1 .
\end{equation}
Finally, since
$\rho(\widehat{\boldsymbol\theta},\boldsymbol{w})
/
\rho^\star_{\mathrm{RIS}}(\boldsymbol{w})
=
\bigl(|S(\widehat{\boldsymbol\theta},\boldsymbol{w})|/S^\star(\boldsymbol{w})\bigr)^2$,
the continuous-mapping gives
\begin{equation}
\frac{\rho(\widehat{\boldsymbol\theta},\boldsymbol{w})}
     {\rho^\star_{\mathrm{RIS}}(\boldsymbol{w})}
\xrightarrow{\mathbb P}
1 .
\end{equation}
This proves the theorem.
\end{proof}

\begin{remark}
The weighted condition in \eqref{eqn:weighted-condition} does not need to be checked directly through the unknown ideal magnitude $S^\star(\boldsymbol{w})$. Since $
S^\star(\boldsymbol{w})=\sum_{n=1}^N |z_n|$, we have
\begin{equation}
\frac{\sum_{n=1}^N |z_n|(\eta_n^{\mathrm{meas}})^2}
     {S^\star(\boldsymbol{w})}
=
\sum_{n=1}^N
\frac{|z_n|}{S^\star(\boldsymbol{w})}
(\eta_n^{\mathrm{meas}})^2
\le
\max_{1\le n\le N}(\eta_n^{\mathrm{meas}})^2 .
\end{equation}
Therefore, the simpler sufficient condition
\begin{equation}
\max_{1\le n\le N}\eta_n^{\mathrm{meas}}
\xrightarrow{\mathbb P}
0
\end{equation}
implies the weighted condition in \eqref{eqn:weighted-condition}. Thus, the theorem can be interpreted as saying that if the worst elementwise measurement-induced phase error, the quantization error, and the reference-direction error all vanish, then the normalized received magnitude and the corresponding received power both converge to their ideal values.
\end{remark}

\begin{remark}
Theorem~\ref{thm:ratio} explains why the noisy-RSS simulations can remain near-optimal even when some weak elements have poor phase estimates. Their loss is weighted by $|z_n|(\eta_n^{\mathrm{meas}})^2$, so they do not dominate the normalized received magnitude. The result is also consistent with single-bit feedback, since $B=1$ is allowed per feedback use. The quantization error decreases through the total refinement depth $L=KB$, giving $\Delta_L=\pi/2^L$.
\end{remark}

\subsection{Finite-Alphabet (Discrete) Phase Shifters}
\label{sec:discrete}
% Practical RIS elements use low-resolution phase shifters that restrict each element to a finite phase alphabet $\Phi_b=\{-\pi+2\pi m/2^b\}_{m=0}^{2^b-1}$ (spacing $2\pi/2^b$). We call the elements of $\Phi_b$ grid points. This constraint applies throughout the algorithm, i.e., the initial phases, the probe configurations, and every committed update all lie in $\Phi_b$. We therefore build the alphabet into the encoder. The TSSQ reconstruction levels (its leaves) are chosen to be exactly the phases the hardware can realize, so its $b$ decisions terminate at a leaf $\widehat\theta_n\in\Phi_b$ that the element applies directly, using $b$ feedback bits per element and no codebook search.

%Because the phase never leaves the grid, there is only one quantization, and it is immaterial whether one views the update as rounding the ideal aligned phase $\theta_n-\hat\alpha_n$ onto $\Phi_b$ in a single step, or as quantizing the correction and then applying it: with $\theta_n\in\Phi_b$ (and the correction taken on the same grid) the two are identical, since rounding commutes with a shift by a grid point, $Q_b(\theta_n-\hat\alpha_n)=\theta_n-Q_b(\hat\alpha_n)$. 
As explained in Section \ref{sec:TSSQ-spec}, a practical RIS element uses a $b$-bit low-resolution phase shifter, allowing it to select one of $2^b$ discrete phase values. Consequently, the residual phase error of Theorem~\ref{thm:onesweep} carries a single deterministic quantization term $\Delta_b$:
\begin{equation}
\label{eqn:xi-matched}
|\xi_n|\;\le\;\eta_n^{\mathrm{meas}}+\Gamma_N+\Delta_b,
\qquad \Delta_b=\frac{\pi}{2^b}.
\end{equation}

\begin{corollary}
\label{cor:discrete}
Let $b\ge2$ and $\xi_{\max}\triangleq\max_n\bigl(\eta_n^{\mathrm{meas}}
+\Gamma_N+\Delta_b\bigr)$. On the direction-stability event of
Theorem~\ref{thm:ratio}, if $\xi_{\max}\le\pi/2$ then
\begin{equation}
\label{eqn:matched-cos2}
\frac{\widehat\rho_{K,b}}{\rho^\star_{\mathrm{RIS}}(\boldsymbol{w})}
\;\ge\;\cos^2(\xi_{\max}).
\end{equation}
In the noiseless, large-array regime
($\max_n\eta_n^{\mathrm{meas}}\to0$, $\Gamma_N\to0$) this becomes
\begin{equation}
\label{eqn:matched-asymp}
\frac{\widehat\rho_{K,b}}{\rho^\star_{\mathrm{RIS}}(\boldsymbol{w})}
\;\ge\;\cos^2\!\Big(\frac{\pi}{2^{b}}\Big),
\qquad
1-\frac{\widehat\rho_{K,b}}{\rho^\star_{\mathrm{RIS}}(\boldsymbol{w})}
\;\le\;\sin^2\!\Big(\frac{\pi}{2^{b}}\Big).
\end{equation}
\end{corollary}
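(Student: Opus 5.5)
The plan is to reuse the first inequality \eqref{eqn:onesweep0} of Theorem~\ref{thm:onesweep}, with the discrete-phase error bound \eqref{eqn:xi-matched} in place of $\eta_n$, and then apply a uniform worst-case bound on the cosines instead of the quadratic expansion.

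On the direction-stability event $\mathcal E_{\mathrm{dir}}$ of Theorem~\ref{thm:ratio}, the argument in the proof of Theorem~\ref{thm:ratio} goes through unchanged with $\Delta_L$ replaced by $\Delta_b=\pi/2^b$. The measurement error is at most $\eta_n^{\mathrm{meas}}$, the reference-direction error is at most $\Gamma_N$, and because the TSSQ leaves are exactly the grid $\Phi_b$, the single quantization step contributes at most $\Delta_b$. This yields \eqref{eqn:xi-matched}, so $|\xi_n|\le\xi_{\max}$ for every $n$.

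Since $\cos$ is even and nonincreasing on $[0,\pi]$, and $\xi_{\max}\le\pi/2$, we get $\cos\xi_n\ge\cos\xi_{\max}\ge0$ for every $n$. Substituting into \eqref{eqn:onesweep0} gives
\begin{equation*}
|S(\widehat{\boldsymbol\theta},\boldsymbol w)|\ge\cos(\xi_{\max})\sum_{n}|z_n|=\cos(\xi_{\max})\,S^\star(\boldsymbol w).
\end{equation*}
The right-hand side is nonnegative, so both sides may be squared without reversing the inequality. Multiplying by $P_{\rm T}$ and identifying $\widehat\rho_{K,b}=P_{\rm T}|S(\widehat{\boldsymbol\theta},\boldsymbol w)|^2$ yields \eqref{eqn:matched-cos2}.

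For \eqref{eqn:matched-asymp}, suppose $\max_n\eta_n^{\mathrm{meas}}\to0$ and $\Gamma_N\to0$. Then $\xi_{\max}\to\Delta_b=\pi/2^b$, and $b\ge2$ gives $\Delta_b\le\pi/4<\pi/2$. Hence the hypothesis $\xi_{\max}\le\pi/2$ holds eventually, and by continuity of $\cos^2$ the bound tends to $\cos^2(\pi/2^b)$. Strictly, the bound is $\cos^2(\xi_{\max})$, which is slightly below $\cos^2(\pi/2^b)$. So \eqref{eqn:matched-asymp} should be read as a limiting statement, i.e., in the liminf or in-probability sense. Alternatively, in the exactly noiseless idealization $\eta_n^{\mathrm{meas}}=\Gamma_N=0$ it holds directly. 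The complementary form then follows from $1-\cos^2=\sin^2$.

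The main obstacle is bookkeeping rather than analysis. First, one must justify that in the discrete setting exactly one quantization term $\Delta_b$ appears, with no separate $\Delta_L$. This relies on the rounding-commutes-with-grid-shift property: the current phase lies in $\Phi_b$ and the correction is quantized on the same grid. Second, one must state precisely the limiting sense in which \eqref{eqn:matched-asymp} holds. The role of $b\ge2$ is simply to make $\Delta_b\le\pi/4$, so that the condition $\xi_{\max}\le\pi/2$ leaves room for the vanishing measurement and direction errors.
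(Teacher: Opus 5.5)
Your proof is correct and follows the argument the paper has in mind (it says only that the result ``follows directly from the definitions''): bound $\cos\xi_n\ge\cos\xi_{\max}\ge0$ in \eqref{eqn:onesweep0}, square, and divide by $P_{\rm T}(S^\star(\boldsymbol{w}))^2$. Your caveat that \eqref{eqn:matched-asymp} holds only in the limiting sense, because $\xi_{\max}\ge\Delta_b$ forces $\cos^2(\xi_{\max})\le\cos^2(\pi/2^b)$, is a correct sharpening of the statement.

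One correction to your side remark on the single quantization term: $\Phi_b$ consists of odd multiples of $\pi/2^b$, so it is a coset of $(2\pi/2^b)\mathbb{Z}$ rather than closed under differences. The correction must therefore be rounded onto $(2\pi/2^b)\mathbb{Z}$, or equivalently the target $\theta_n-\hat\alpha_n$ must be rounded directly onto $\Phi_b$. Rounding the correction onto $\Phi_b$ itself would push $\theta_n-Q(\hat\alpha_n)$ off the grid and incur a cascaded error of up to $2\pi/2^b$. Relatedly, $b\ge2$ is also exactly what keeps the $\pi/2$ probe $\theta_n+\pi/2$ on the grid.
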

\begin{proof}
Follows directly from the definitions.
\end{proof}

\section{Multi-User RIS}
In this section, we extend the results to the case with $K$ users. Let us denote the channel between RIS and User $k$ by $\boldsymbol{h}_{\rm r}^{(k)}\in\C^{1\times N}$. The RIS applies $\boldsymbol{\Theta}=\mathrm{diag}(e^{j\theta_1},\dots,e^{j\theta_N})$, and the effective channel between base station (BS) and User $k$ is
\begin{equation}
\label{eq:gk}
\boldsymbol{g}_k(\bm\theta)=\boldsymbol{h}_{\rm r}^{(k)}\boldsymbol{\Theta}\boldsymbol{G}\in\C^{1\times N_{\rm T}}.
\end{equation}
The BS uses a precoder $\boldsymbol{W}=[\boldsymbol{w}_1,\dots,\boldsymbol{w}_K]\in\C^{N_{\rm T}\times K}$ with unit-norm columns, $\|\boldsymbol{w}_k\|=1$. With data symbols $s_j$ ($\E|s_j|^2=1$) and noise $n_k\sim\mathcal{CN}(0,\sigma^2)$, User $k$ receives $y_k=\boldsymbol{g}_k\boldsymbol{w}_ks_k+\sum_{j\neq k}\boldsymbol{g}_k\boldsymbol{w}_js_j+n_k$, and we have
\begin{equation}
\label{eq:sinr}
\sinr_k(\bm\theta,\boldsymbol{W})=\frac{|\boldsymbol{g}_k\boldsymbol{w}_k|^2}{\sigma^2+\sum_{j\neq k}|\boldsymbol{g}_k\boldsymbol{w}_j|^2}.
\end{equation}
Let us define $\phi_k\triangleq\log_2(1+\sinr_k)$. For fairness weights $\alpha_k>0$, the weighted objective is
\begin{equation}
\label{eq:Phi}
\Phi_\alpha(\bm\theta,\boldsymbol{W})\triangleq\sum_{k=1}^K\alpha_k\,\phi_k(\bm\theta,\boldsymbol{W}).
\end{equation}

\subsection{A Unified Single-Bit Consensus}
\label{sec:primitive}

The RIS phases and the transmit beamformers are shared resources. A phase
$\theta_n$ affects every effective user channel in \eqref{eq:gk}, while a
precoding vector $\mathbf w_k$ carries User $k$'s signal and simultaneously
creates interference for all users $j\neq k$. Hence, both passive and active
updates should account for their network-wide effect. We use the following common one-bit consensus rule. A shared variable is probed
in two opposite directions; each user reports whether its own SINR improves, and
the BS updates according to a weighted vote. Let $\mathbf x$ denote the variable being updated, either the RIS phase vector
$\bm\theta$ or the precoder $\mathbf W$. Let $\mathcal R$ be the constraint-enforcing map, let $\mathbf d$ be a probe direction, and let
$\beta>0$ be the probe size. The BS forms $\mathbf x^{+}=\mathcal R(\mathbf x+\beta\mathbf d)$ and 
$\mathbf x^{-}=\mathcal R(\mathbf x-\beta\mathbf d)$.
Each user returns the one-bit comparison
\begin{equation}
\label{eq:bit}
b_k
=
\mathbf 1
\left\{
\sinr_k(\mathbf x^{+})>\sinr_k(\mathbf x^{-})
\right\}.
\end{equation}
Equivalently, since $\phi_k=\log_2(1+\sinr_k)$ is increasing in $\sinr_k$,
this bit also indicates whether User $k$'s rate improves. The BS then forms $S=\sum_{k=1}^{K}\alpha_k(2b_k-1)$,
where $\alpha_k\ge0$ is the fairness or priority weight of User $k$, and
updates
\begin{equation}
    \mathbf x
\leftarrow
\mathcal R\left(\mathbf x+\beta\,\operatorname{sgn}_0(S)\mathbf d\right),
\end{equation}
where $\operatorname{sgn}_0(S)=1$ if $S\ge0$ and
$\operatorname{sgn}_0(S)=-1$ otherwise.
The RIS and beamforming updates are two instances of this method. For the RIS phase update, set $\mathbf x=\bm\theta$, $\mathcal R(\bm\theta)=\bm\theta \bmod 2\pi$, and 
$\mathbf d=\mathbf e_n$.
For a selected RIS element $n$, the BS probes $\theta_n^{+}=\theta_n+\beta_\theta$ and $\theta_n^{-}=\theta_n-\beta_\theta$
with all other phases fixed. Users vote and the BS computes $S_n=\sum_{k=1}^{K}\alpha_k(2b_k-1)$,
and updates $\theta_n
\leftarrow
\theta_n+\beta_\theta\,\operatorname{sgn}_0(S_n)
\quad \bmod 2\pi$.

For the beamforming update, set $\mathbf x=\mathbf W=[\mathbf w_1,\ldots,\mathbf w_K]$ and let $\mathcal R$ normalize the updated beamforming column. To update Column $k$, draw
$\mathbf p_k\sim\mathcal{CN}(\mathbf 0,\mathbf I_{N_{\rm T}})$ and define the matrix perturbation $
\boldsymbol{d} = \mathbf P_k
=
\mathbf p_k\mathbf e_k^T
\in\mathbb C^{N_{\rm T}\times K}$.
Equivalently, $\mathbf P_k$ has $\mathbf p_k$ as its $k$-th column and
zeros in all other columns. The BS forms $\mathbf W^{+}
=
\mathcal R(\mathbf W+\beta_w\mathbf P_k)$ and
$\mathbf W^{-}
=
\mathcal R(\mathbf W-\beta_w\mathbf P_k)$
so that only the $k$-th beamforming column is perturbed. In column notation, $\mathbf w_k^{+}
=
\mathrm{normalize}(\mathbf w_k+\beta_w\mathbf p_k)$ and
$\mathbf w_k^{-}
=
\mathrm{normalize}(\mathbf w_k-\beta_w\mathbf p_k)$,
while all other columns remain fixed. Because
$\mathbf w_k$ not only increases the desired signal of User $k$, but also changes the
interference experienced by users $j\neq k$, all users need to vote. The BS computes $S_k=\sum_{j=1}^{K}\alpha_j(2b_j-1)$
and sets
\begin{equation}
    \mathbf W
\leftarrow
\begin{cases}
\mathbf W^{+}, & S_k\ge0,\\
\mathbf W^{-}, & S_k<0.
\end{cases}
\end{equation}

Updating one beamforming column at a time makes the consensus decision
interpretable: users vote on the network-wide effect of a single shared
degree of freedom. A global perturbation of all columns would instead mix the
effects of desired-signal changes and interference changes across all users,
making one-bit feedback much less informative.

\begin{remark}
\label{rem:selfish}
A selfish column update is obtained by replacing the consensus weights with the
degenerate vector $\bm\alpha=\mathbf e_k$. Then, only User $k$'s vote affects
the update of $\mathbf w_k$. This ignores the interference that
$\mathbf w_k$ causes to the other users. The consensus rule retains the same
one-bit feedback structure, but allows users harmed by the perturbation to vote
against it.
\end{remark}

\subsection{One-Bit Vote Consistency} In this subsection, we want to show that a single sign bit, taken with a small symmetric probe, reveals the sign of a directional derivative.

\begin{lemma}
\label{lem:vote}
Fix $(\bm\theta,\mathbf{W})$, an element $n$, and a user $k$, and let $g\triangleq\partial\sinr_k/\partial\theta_n$. The difference $D_k(\beta)\triangleq\sinr_k(\bm\theta+\beta\mathbf e_n,\mathbf W)-\sinr_k(\bm\theta-\beta\mathbf e_n,\mathbf W)$ satisfies
\begin{equation}
\label{eq:Dexp}
D_k(\beta)=2\beta\,g+O(\beta^3)\qquad(\beta\to0).
\end{equation}
Consequently, if $g\neq0$, there is $\beta_0>0$ such that for all $0<\beta_\theta<\beta_0$, the feed-back bit obeys
\begin{equation}
    2b_k-1=\sign(g)=\sign\Bigl(\frac{\partial\phi_k}{\partial\theta_n}\Bigr).
\end{equation}
\end{lemma}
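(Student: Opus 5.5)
The approach is a symmetric Taylor expansion of the one-dimensional map $f(t)\triangleq\sinr_k(\bm\theta+t\mathbf e_n,\mathbf W)$ about $t=0$. The first step is to check that $f$ is smooth, indeed real-analytic, in $t$. By \eqref{eq:gk}, the effective channel $\boldsymbol g_k(\bm\theta+t\mathbf e_n)$ depends on $t$ only through the single entry $e^{j(\theta_n+t)}$. Hence each $|\boldsymbol g_k\boldsymbol w_j|^2$ has the form $c_{0}+c_1\cos t+c_2\sin t$ with real constants. It follows that both the numerator and the denominator of \eqref{eq:sinr} are trigonometric polynomials of degree one in $t$. The denominator is at least $\sigma^2>0$, so $f$ is a ratio of $C^\infty$ functions with a nonvanishing denominator. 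In particular, $f\in C^3$ on a neighborhood of $0$, and $f'(0)=g$.

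The second step expands $f(\beta)$ and $f(-\beta)$ to third order with Lagrange remainders:
\begin{equation*}
f(\pm\beta)=f(0)\pm\beta f'(0)+\tfrac{\beta^2}{2}f''(0)\pm\tfrac{\beta^3}{6}f'''(\tau_\pm).
\end{equation*}
Subtracting the two expansions cancels the even-order terms. This gives $D_k(\beta)=2\beta g+\tfrac{\beta^3}{6}\bigl(f'''(\tau_+)+f'''(\tau_-)\bigr)$. Since $f'''$ is continuous, it is bounded by some $C$ on $[-1,1]$, so $|D_k(\beta)-2\beta g|\le C\beta^3/3$. This is exactly \eqref{eq:Dexp}.

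The third step handles the sign. Suppose $g\neq0$ and set $\beta_0=\min\{1,\sqrt{3|g|/C}\}$, with $\beta_0=1$ if $C=0$. Then for $0<\beta<\beta_0$,
\begin{equation*}
|D_k(\beta)-2\beta g|<2\beta|g|,
\end{equation*}
so $\sign D_k(\beta)=\sign(g)$. By \eqref{eq:bit}, $b_k=\mathbf 1\{D_k(\beta_\theta)>0\}$, which yields $2b_k-1=\sign(g)$.

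The last identity uses the chain rule: $\partial\phi_k/\partial\theta_n=g/\bigl((1+\sinr_k)\ln 2\bigr)$. The prefactor is strictly positive, so the two partial derivatives have the same sign.

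The only point needing care is the uniformity of the remainder constant, that is, making sure $C$ is finite. This follows from the explicit trigonometric-polynomial structure and the bound $\sigma^2>0$, which rules out any singularity. Beyond that, the argument is routine.
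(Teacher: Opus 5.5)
Your proposal is correct and follows the same route as the paper: expand $f(\pm\beta)$ about $0$ so that the even-order terms cancel (the paper phrases this as $D_k$ being odd), let the linear term $2\beta g$ dominate for small $\beta$, and transfer the sign to $\partial\phi_k/\partial\theta_n$ through the positive factor $1/\bigl((1+\sinr_k)\ln 2\bigr)$. Your version is somewhat more careful than the paper's. You justify the smoothness of $f$ (numerator and denominator are degree-one trigonometric polynomials, with denominator at least $\sigma^2>0$), and you give an explicit Lagrange remainder bound and an explicit $\beta_0$, where the paper simply invokes the Maclaurin series.
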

\begin{proof}
Let $f(\beta)=\sinr_k(\bm\theta+\beta\mathbf e_n,\mathbf W)$. Then, $D_k(\beta)=f(\beta)-f(-\beta)$ is an odd function of $\beta$, so its Maclaurin expansion contains only odd powers: 
\begin{equation}
    D_k(\beta)=2f'(0)\beta+\tfrac{2}{3!}f'''(0)\beta^3+\cdots=2\beta g+O(\beta^3),
\end{equation}
 giving \eqref{eq:Dexp}. If $g\neq0$, then $D_k(\beta)=2\beta g\,(1+O(\beta^2))$, so there is $\beta_0>0$ with $\sign(D_k(\beta))=\sign(g)$ for all $0<\beta<\beta_0$. The bit \eqref{eq:bit} is $b_k=\mathbf 1\{D_k>0\}$, hence $2b_k-1=\sign(g)$. Finally, $\phi_k=\log_2(1+\sinr_k)$ is strictly increasing in $\sinr_k$, so $\partial\phi_k/\partial\theta_n=g/((1+\sinr_k)\ln2)$ has the same sign as $g$.
\end{proof}

\begin{lemma}
\label{lem:spsa}
Equip $\mathbb C^{N_{\rm T}\times K}$ with the real inner product
\begin{equation}
\label{eq:realip}
\langle \mathbf A,\mathbf B\rangle_{\mathbb R}
\triangleq
\Re\operatorname{tr}(\mathbf A^{\mathsf H}\mathbf B).
\end{equation}
Let
\[
\nabla_{\mathbf W}\sinr_k
\triangleq
2\frac{\partial \sinr_k}{\partial \mathbf W^*}
\]
denote the Wirtinger conjugate gradient, so that the directional derivative of
$\sinr_k$ along a complex perturbation $\mathbf P$ is
\begin{equation}
\label{eq:dirderiv}
\left.
\frac{d}{d\beta}
\right|_{\beta=0}
\sinr_k(\bm\theta,\mathbf W+\beta\mathbf P)
=
\left\langle
\nabla_{\mathbf W}\sinr_k,\mathbf P
\right\rangle_{\mathbb R}.
\end{equation}
Fix $(\bm\theta,\mathbf W)$ and a perturbation direction $\mathbf P$. Define
the SINR difference
\begin{equation}
    D_k^W(\beta)
\triangleq
\sinr_k(\bm\theta,\mathbf W+\beta\mathbf P)
-
\sinr_k(\bm\theta,\mathbf W-\beta\mathbf P).
\end{equation}
If $\sinr_k$ is three times continuously differentiable in a neighborhood of
$\mathbf W$, then
\begin{equation}
\label{eq:spsa}
D_k^W(\beta)
=
2\beta
\left\langle
\nabla_{\mathbf W}\sinr_k,\mathbf P
\right\rangle_{\mathbb R}
+
O(\beta^3).
\end{equation}
Consequently, whenever $\beta$ is sufficiently small and
$
\left\langle
\nabla_{\mathbf W}\sinr_k,\mathbf P
\right\rangle_{\mathbb R}
\neq 0$,
the comparison bit $\mathbf 1\{D_k^W(\beta)>0\}$
reveals the sign of the directional derivative of $\sinr_k$ along
$\mathbf P$.
\end{lemma}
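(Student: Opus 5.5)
The argument mirrors the proof of Lemma~\ref{lem:vote}, except that the single real coordinate $\theta_n$ is replaced by the scalar line $\beta\mapsto\mathbf W+\beta\mathbf P$ in the real vector space $\mathbb C^{N_{\rm T}\times K}\cong\mathbb R^{2N_{\rm T}K}$.

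\textbf{Restriction to a line.} Define $f(\beta)\triangleq\sinr_k(\bm\theta,\mathbf W+\beta\mathbf P)$ for real $\beta$ in a small interval around $0$. Since $\sinr_k$ is $C^3$ as a function of the real and imaginary parts of $\mathbf W$ near $\mathbf W$, the composition $f$ is $C^3$ on that interval. Then $D_k^W(\beta)=f(\beta)-f(-\beta)$ is an odd function of $\beta$.

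\textbf{Identifying $f'(0)$ with the Wirtinger expression.} Write $\sinr_k$ as a real-valued function of $(\mathbf W,\mathbf W^*)$ and apply the chain rule:
\[
f'(0)=\operatorname{tr}\Bigl(\bigl(\tfrac{\partial\sinr_k}{\partial\mathbf W}\bigr)^{\mathsf T}\mathbf P\Bigr)+\operatorname{tr}\Bigl(\bigl(\tfrac{\partial\sinr_k}{\partial\mathbf W^*}\bigr)^{\mathsf T}\mathbf P^*\Bigr).
\]
Because $\sinr_k$ is real, $\partial\sinr_k/\partial\mathbf W=(\partial\sinr_k/\partial\mathbf W^*)^*$. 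The two terms are therefore complex conjugates, and their sum equals $2\Re\operatorname{tr}\bigl((\partial\sinr_k/\partial\mathbf W^*)^{\mathsf H}\mathbf P\bigr)=\langle\nabla_{\mathbf W}\sinr_k,\mathbf P\rangle_{\mathbb R}$. This confirms \eqref{eq:dirderiv}.

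Note that the SINR in \eqref{eq:sinr} is a ratio of quadratic forms in $\mathbf W$ whose denominator is at least $\sigma^2>0$. It is therefore real-analytic, so the $C^3$ hypothesis holds automatically away from normalization issues. The lemma perturbs $\mathbf W$ directly, not through $\mathcal R$, so no normalization enters.

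\textbf{Odd Taylor expansion.} Expand $f(\pm\beta)$ by Taylor's theorem to third order with Lagrange remainder:
\[
f(\pm\beta)=f(0)\pm\beta f'(0)+\tfrac{\beta^2}{2}f''(0)\pm\tfrac{\beta^3}{6}f'''(\xi_\pm),\qquad \xi_\pm\text{ between }0\text{ and }\pm\beta.
\]
Subtracting cancels the even-order terms and gives
\[
D_k^W(\beta)=2\beta f'(0)+\tfrac{\beta^3}{6}\bigl(f'''(\xi_+)+f'''(\xi_-)\bigr).
\]
Continuity of $f'''$ bounds it on a compact neighborhood, so the remainder is $O(\beta^3)$. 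This proves \eqref{eq:spsa}.

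\textbf{Sign consequence.} Suppose $g_P\triangleq\langle\nabla_{\mathbf W}\sinr_k,\mathbf P\rangle_{\mathbb R}\neq0$. Then
\[
D_k^W(\beta)=2\beta g_P\bigl(1+O(\beta^2)/g_P\bigr).
\]
Choose $\beta_0>0$ so that the relative error stays below $1$ on $(0,\beta_0)$. For such $\beta$, $\sign D_k^W(\beta)=\sign g_P$, so the bit $\mathbf 1\{D_k^W(\beta)>0\}$ equals $\mathbf 1\{g_P>0\}$.

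\textbf{Main obstacle.} The only nontrivial step is the Wirtinger bookkeeping: confirming that the factor $2$ in $\nabla_{\mathbf W}=2\,\partial/\partial\mathbf W^*$ together with the real inner product $\Re\operatorname{tr}(\cdot^{\mathsf H}\cdot)$ reproduces the real directional derivative exactly. I would check this entrywise, using $\partial/\partial w^*=\tfrac12(\partial_x+j\partial_y)$, to make sure no factor of $2$ or a conjugate is lost.
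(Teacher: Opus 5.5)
Your proof is correct and follows the paper's route: restrict to $f(\beta)=\sinr_k(\bm\theta,\mathbf W+\beta\mathbf P)$, use the oddness of $f(\beta)-f(-\beta)$ with a third-order Taylor expansion to get $2\beta f'(0)+O(\beta^3)$, and let the linear term dominate for small $\beta$. Your chain-rule check of \eqref{eq:dirderiv} and the Lagrange-remainder bound add rigor the paper skips: it takes the Wirtinger identity as given and reuses the Maclaurin argument of Lemma~\ref{lem:vote}. Neither changes the argument.
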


\begin{proof}
Let $f(\beta)
=
\sinr_k(\bm\theta,\mathbf W+\beta\mathbf P)$.
By the smoothness assumption, $f$ admits a Taylor expansion around $\beta=0$. Its first derivative is
$
f'(0)
=
\left\langle
\nabla_{\mathbf W}\sinr_k,\mathbf P
\right\rangle_{\mathbb R},
$
by \eqref{eq:dirderiv}. Following the proof of Lemma~\ref{lem:vote},
\begin{equation}
\label{eq:two_sided_bf_difference}
D_k^W(\beta)
=
f(\beta)-f(-\beta)
=
2\beta f'(0)+O(\beta^3).
\end{equation}
Substituting the expression for $f'(0)$ yields \eqref{eq:spsa}. If
$f'(0)\neq 0$, then for sufficiently small $\beta$, the linear term
dominates the remainder $O(\beta^3)$, and
\[
\sign D_k^W(\beta)
=
\sign f'(0)
=
\sign
\left\langle
\nabla_{\mathbf W}\sinr_k,\mathbf P
\right\rangle_{\mathbb R}.
\]
\end{proof}

The previous two lemmas provide the local meaning of the one-bit feedback. For a sufficiently small symmetric probe, the bit returned by User $k$ indicates whether that user prefers the positive or negative direction, to the first order. The feedback reveals the sign of each user's local change, but not its magnitude. Therefore, in the general interference-limited multi-user setting, the proposed rule can be interpreted as a coarse one-bit approximation to the ascent direction of the weighted performance objective $\Phi_\alpha$. Since each user reports only a binary comparison, the BS does not observe the exact directional derivatives $\partial_{\mathbf d}\phi_k$ or their magnitudes. Instead, it observes their signs through the local probe. The weighted-majority score therefore acts as a sign-only surrogate for the true weighted directional derivative
\begin{equation}
\label{eq:true_weighted_directional_derivative}
\partial_{\mathbf d}\Phi_\alpha
=
\sum_k \alpha_k \partial_{\mathbf d}\phi_k .
\end{equation}
When the users' local directional preferences are broadly aligned, or when the majority direction also captures the dominant weighted derivative contribution, this surrogate selects the same first-order direction as the weighted-sum objective. Thus, the consensus step provides a low-overhead approximation of gradient-based coordination using only single-bit user feedback.

In the single-user case, this approximation becomes exact at the sign level. When $K=1$, the weighted-majority score reduces to the single user's bit, and Lemmas~\ref{lem:vote} and~\ref{lem:spsa} show that the update follows the local ascent direction for sufficiently small probes. Hence, the proposed primitive reduces to the standard single-user single-bit adaptation rule for $K=1$, while for multiple users, it provides a sign-based scalable approximation to the coordinated weighted-gradient adaptation.

\section{Simulation Results}
\label{Results}
For simulations, the channels are i.i.d. Rayleigh, i.e., the entries of $\boldsymbol{G}$ and $\boldsymbol{h}_{\rm r}$, are i.i.d. $\mathcal{CN}(0,1)$. The initial RIS phases are $U(-\pi,\pi]$, $\boldsymbol{w}=\frac{1}{\sqrt{N_{\rm T}}}[1,\dots,1]$, TSSQ uses $B=1$ bit/use and $R=6$ (so $L=6$ bits/element, i.e., 6-level TSSQ), and the beamforming step is $\beta=0.01$ ($N_{\rm T}=4$) or $0.003$ ($N_{\rm T}=16$). Throughout this section, "benchmark" denotes the AO full-CSI method of Section~\ref{SysMod}.

Fig.~\ref{fig:convergence} shows convergence results for a choice of transmit antennas and RIS elements for one user. To place our method in context, we compare it with two families of channel-quantization baselines. The first baseline applies a scalar quantizer (SQ), quantizing each channel coefficient independently. We consider resolutions of $1$, $2$, $3$, and $4$ bits per number, with quantizer levels optimized through the Lloyd algorithm~\cite{gray1998quantization}. Since the system has $(N_{\rm T}+1)N$ complex channels, equivalently $2(N_{\rm T}+1)N$ real numbers, the feedback cost grows quickly. In the setting of Fig.~\ref{fig:convergence}, for example, a $2$-bit SQ must encode $10000$ numbers, amounting to $20000$ bits. Our TSSQ-based algorithm reaches comparable performance in only $3842$ time slots, which corresponds to fewer than $3842$ feedback bits, as the beamforming stage transmits one bit every two slots. The second baseline is a random vector quantizer (RVQ)~\cite{gray1998quantization}. RVQ outperforms SQ when the vectors are low-dimensional, but it becomes difficult to apply as the dimension grows. We therefore quantize the rows of $\mathbf{G}$ directly and split the channel vector $\mathbf{h}_{\rm r}$ into subvectors of length $4$. For each vector, $B_1$ bits encode its direction and $B_2$ bits its magnitude. Because the magnitude is relatively easy to represent, it suffices to use $B_2=2$ bits. As shown in Fig.~\ref{fig:convergence}, our approach outperforms both the SQ and RVQ baselines. Also, it is important to note that we assumed perfect channel estimation for SQ and RVQ; otherwise, their performance would degrade more.

\begin{figure}[h]
    \centering
    \includegraphics[width=0.95\linewidth]{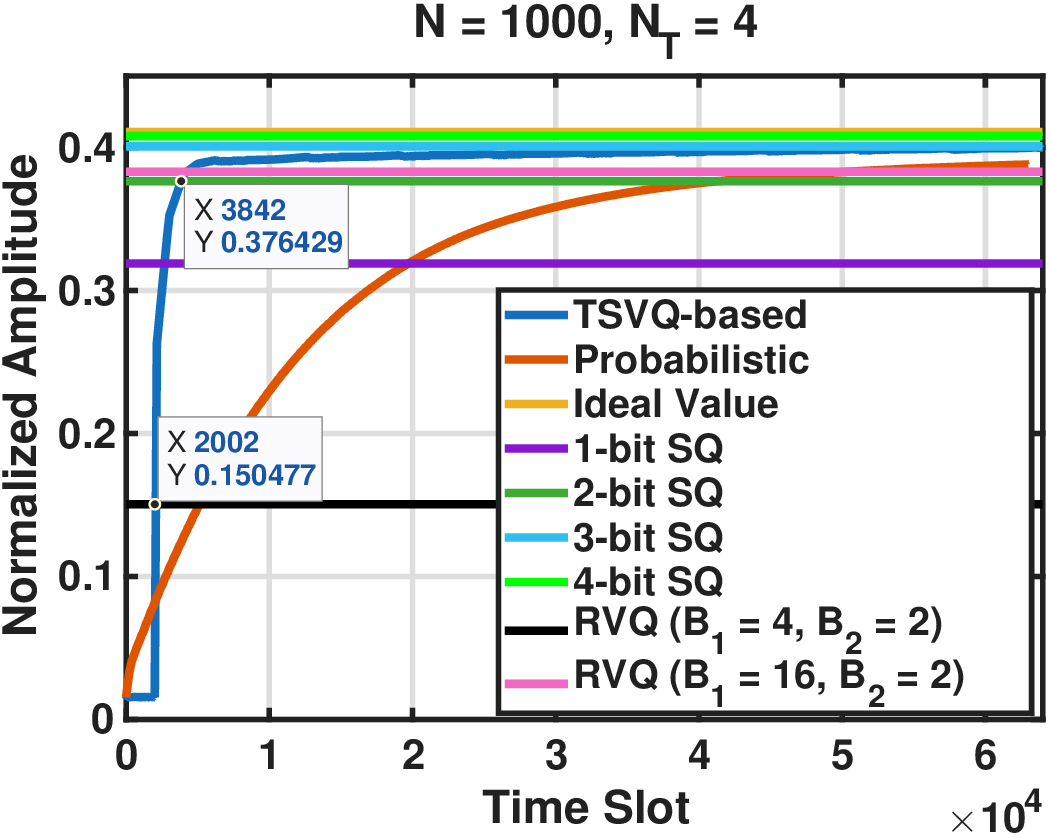}
    \caption{Convergence of the limited feedback algorithm to the optimal value: Results for $100$ iterations of Monte Carlo simulation with $1000$ RIS elements, $4$ transmit antennas, and one user.}
         
    \label{fig:convergence}
\end{figure}

Time variation adds a further layer of difficulty to RIS-assisted systems, since achieving the optimal operating point requires both the transmitter and the RIS to track the channel at every instant. Fig.~\ref{fig:time_varying} illustrates a hybrid scheme designed to cope with this setting. During the first alternation of the RIS phase adaptation, we run our TSSQ-based algorithm to achieve a good initial configuration. Afterward, to follow the channel as it drifts, we switch to a probabilistic update~\cite{mudumbai2009distributed}. At each step, a random perturbation is applied to the phases, and a single feedback bit reports whether the received signal improved or worsened. For comparison, we also show the case in which only probabilistic phase perturbations are used throughout. The channel evolves according to a first-order autoregressive (AR(1)) model, $G[nN_S] = a_1\, G[(n-1)N_S] + b_1 u_1$, and $h_{\rm r}[nN_S] = a_2\, h_{\rm r}[(n-1)N_S] + b_2 u_2$, 
where $a_1, a_2, b_1, b_2$ are constants and $u_1, u_2$ are independent
innovations with i.i.d.\ $\mathcal{CN}(0,1)$ entries of the appropriate
dimensions. With these Gaussian innovations, the model is equivalently known as a first-order Gauss-Markov model. Here, $N_S$ denotes the interval over which the channel remains static. We set
$a_1 = a_2 = 0.9/\sqrt{0.9^2 + 0.1^2}$ and
$b_1 = b_2 = 0.1/\sqrt{0.9^2 + 0.1^2}$. Following 3GPP TS~38.211 (v18.9.0, Release~18), an orthogonal frequency division multiplexing (OFDM) frame with $15$~kHz
subcarrier spacing carries $140$ symbols; therefore, we treat the channel as static within a frame and independently refreshed from one frame to the next. Using this model, Fig.~\ref{fig:time_varying_single} reports the tracking performance of the hybrid algorithm for a single-channel realization. Across $100$ Monte Carlo runs, our algorithm
consistently attains roughly $80$ to $90$ percent of the optimal performance (Fig.~\ref{fig:time_varying_monte}).

\begin{figure}[t]
    \centering
    \begin{subfigure}{0.95\linewidth}
        \centering
        \includegraphics[width=\linewidth]{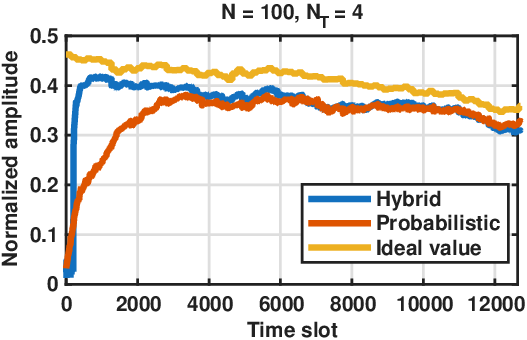}
        \caption{Single realization}
        \label{fig:time_varying_single}
    \end{subfigure}

    \vspace{1ex}

    \begin{subfigure}{0.95\linewidth}
        \centering
        \includegraphics[width=\linewidth]{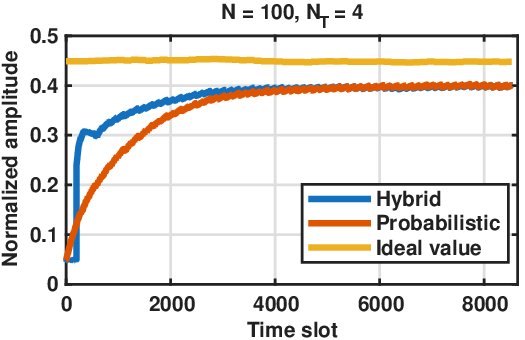}
        \caption{Monte Carlo average (100 realizations)}
        \label{fig:time_varying_monte}
    \end{subfigure}
    \caption{Performance of limited-feedback algorithms in a standard time-varying channel: (a) a single channel realization and (b) the Monte Carlo average over 100 realizations.}
    \label{fig:time_varying}
\end{figure}

Fig.~\ref{fig:dominance} examines the finite-$N$ behavior of the geometric estimator. Our analysis predicts that the ratio $\min_n \frac{|A_n|}{|z_n|}$ grows on the order of $\sqrt{N}/\log N$, so that element-wise dominance improves as the array becomes larger.

To verify this prediction, we sweep the number of RIS elements $N$ over a logarithmic grid and, for each $N$, average over $200$ independent channel realizations. The horizontal axis reports $N$ and the vertical axis reports the dominance ratio, both on a logarithmic scale. We plot two empirical curves, the mean ratio and its $10$th percentile, the latter capturing the unfavorable tail of the distribution. For reference, we also draw the theoretical lower bound
$(c_\delta/2)\sqrt{N}/\log(4N/\delta)$, with $c_\delta = \sqrt{\delta}/2$ and $\delta = 0.1$. As Fig.~\ref{fig:dominance} shows, both empirical curves remain above the bound and follow the same $\sqrt{N}/\log N$ trend across nearly three orders of magnitude in $N$, confirming that aggregate dominance strengthens with the array size exactly as predicted by the theory.

\begin{figure}[h]
    \centering
    \includegraphics[width=0.95\linewidth]{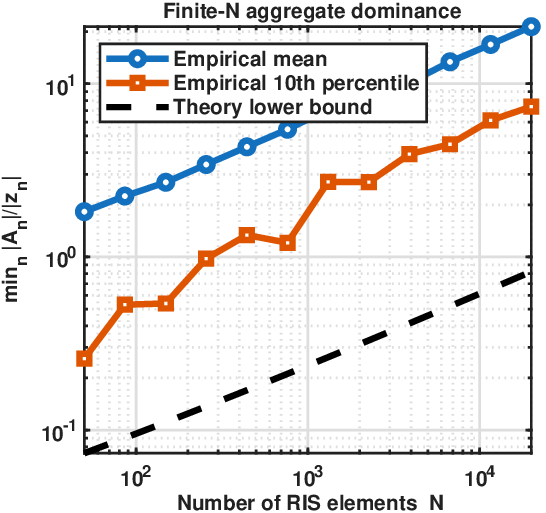}
    \caption{Empirical dominance ratio $\min_n|A_n|/|z_n|$ versus the number of RIS elements $N$ (log-log scale). The dashed line is the theoretical lower bound $(c_\delta/2)\sqrt{N}/\log(4N/\delta)$ from Theorem~\ref{thm:dominance}, with $c_\delta=\sqrt{\delta}/2$ and $\delta=0.1$.}
         
    \label{fig:dominance}
\end{figure}

% \begin{figure}[t]
%     \centering
%     \includegraphics[width=0.9\linewidth]{fig_discrete_curves.eps}
%     \caption{Achieved power ratio $|S|^2/(S^\ast)^2$ of the geometric
%     scheme versus the hardware phase resolution $b$, for feedback resolutions
%     $L\in\{2,4,6,8\}$ bits, at $N=1000$. Solid curves are the simulated
%     performance; the dashed curves are the worst-case lower bound $\cos^2\big(\min(\pi/2^{L}+\pi/2^{b},\pi/2)\big)$.}
%     \label{fig:discrete_curves}
% \end{figure}

% \begin{figure}[t]
%     \centering
%     \includegraphics[width=0.9\linewidth]{fig_discrete_heatmap.eps}
%     \caption{Simulated achieved power ratio over the feedback/hardware resolution
%     plane $(L,b)$, at $N=1000$. The dashed line is the equal-split locus $L=b$.}
%     \label{fig:discrete_heatmap}
% \end{figure}

\begin{figure}[t]
    \centering
    \includegraphics[width=0.95\linewidth]{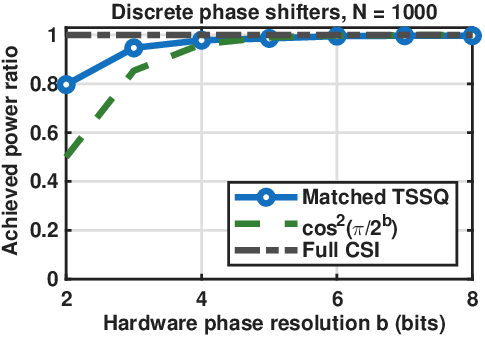}
    \caption{Achieved power ratio $\widehat\rho_{K,b}/\rho^\star_{\mathrm{RIS}}$
    of the geometric scheme with discrete phase shifters, versus the hardware resolution $b$ ($N=1000$). "Matched TSSQ" uses the codebook-matched TSSQ and the horizontal line at $1$ is the full-CSI value.}
    \label{fig:discrete_matched}
\end{figure}

% Fig.~\ref{fig:discrete_matched} compares the two ways of realizing a $b$-bit phase against the full-CSI value (the horizontal line at $1$). Matching the TSSQ leaves to the hardware alphabet is uniformly better than the cascade at every
% resolution. It reaches within a few percent of full CSI by $b=4$, a resolution the cascade attains only around $b=5$. This is precisely the one-bit
% advantage predicted by Corollary~\ref{cor:discrete}: the cascade's two roundings sum to a worst-case residual $2\pi/2^b$, whereas the single rounding of the matched scheme leaves only $\pi/2^b$. The empirical curves sit above their respective $\cos^2(\cdot)$ bounds, since those bounds are worst-case over the
% elements, while most elements are quantized far more accurately. In short, incorporating the hardware alphabet into the encoder provides the performance of the full-CSI system   %recovers close to the full-CSI beamforming gain 
% with only a handful of feedback bits per element. %, and without any nearest-codeword search.

Fig.~\ref{fig:discrete_matched} shows the power ratio achieved by the discrete scheme against the full-CSI value (the horizontal line at $1$). Confining every phase to the $b$-bit alphabet $\Phi_b$ costs remarkably little. The ratio climbs rapidly with $b$ and is already within a few percent of the full CSI by $b=4$. Therefore, a handful of feedback bits per element suffice to recover essentially the entire beamforming gain. The empirical curve sits above its worst case bound $\cos^2(\pi/2^b)$ of Corollary~\ref{cor:discrete}, since that bound is the worst case over the elements whereas most elements are quantized far more accurately. %Because the phase never leaves the grid, a single quantization of size $\Delta_b=\pi/2^b$ governs the loss, and the encoder attains this operating point with exactly $b$ bits per element. %and no codebook search.

% This behavior is summarized in Fig.~\ref{fig:discrete_heatmap}, which reports the simulated ratio over the entire $(L,b)$ plane. Performance is governed by the
% coarser of the two resolutions: the contours are roughly symmetric in $L$
% and $b$, and the map is dark (low power) whenever either axis is small. The
% equal-split locus $L=b$ (dashed) therefore traces an efficient operating curve.
% For a given total number of control bits, it is wasteful to make one resolution
% much finer than the other. Beyond about five bits on each axis, the achieved ratio
% exceeds $0.98$, confirming that near-optimal beamforming is attainable with modest, and jointly balanced, feedback and hardware resolutions.

\begin{figure}[t]
    \centering
    \begin{subfigure}{0.95\linewidth}
        \centering
        \includegraphics[width=\linewidth]{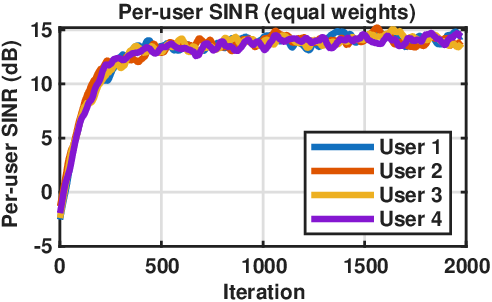}
        \caption{Equal weights, $\alpha=[1,1,1,1]$.}
        \label{fig:sinr_equal}
    \end{subfigure}

    \vspace{1ex}

    \begin{subfigure}{0.95\linewidth}
        \centering
        \includegraphics[width=\linewidth]{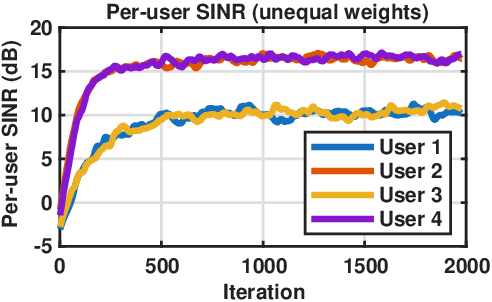}
        \caption{Unequal weights, $\alpha=[1,2,1,2]$.}
        \label{fig:sinr_unequal}
    \end{subfigure}
    \caption{Per-user SINR evolution under single-bit feedback for (a) equal and
    (b) unequal fairness weights ($K=4$, $N_{\rm T}=8$, $N=100$).}
    \label{fig:sinr}
\end{figure}

Fig.~\ref{fig:sinr} shows the per-user SINR trajectories in a multi-user scenario with $K=4$ users. Under equal weights (Fig.~\ref{fig:sinr_equal}), the four users converge to essentially the same SINR, so the system shares the RIS evenly among them. Under the unequal weights $\alpha=[1,2,1,2]$ (Fig.~\ref{fig:sinr_unequal}), the shared RIS configuration prioritizes Users 2 and 4, increasing their SINRs relative to the equal-weight case while reducing those of Users 1 and 3.
% Under the unequal weights $\alpha=[1,2,1,2]$ (Fig.~\ref{fig:sinr_unequal}), the shared RIS favors the more heavily weighted users, which stabilize at higher SINRs while the lightly weighted users are correspondingly lowered. 
Therefore, the weights control the operating point by adjusting the RIS phases and beamformer according to each user's requirements or priority.

Fig.~\ref{fig:selfish_consensus} compares the two beamforming updates (selfish and consensus methods) in terms of the achieved sum rate. The consensus update, in which every user reports a bit on each probed column, settles at a higher sum rate compared with the selfish rule. The gap reflects the interference that each column creates for the other users, and accounting for it, through the extra votes, lets the columns be steered cooperatively.

\begin{figure}[t]
    \centering
    \includegraphics[width=0.95\linewidth]{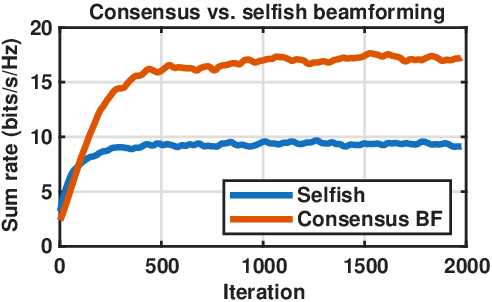}
    \caption{Sum rate of the consensus (weighted-majority) beamforming update
    versus the selfish per-user rule, averaged over channel realizations
    ($K=4$, $N_{\rm T}=8$, $N=100$).}
    \label{fig:selfish_consensus}
\end{figure}

\section{Conclusion}
\label{Conclusion}

We presented a channel-estimation-free framework for beamforming and RIS phase adaptation using only single-bit RSS feedback. The proposed methods avoid explicit cascaded-channel estimation and high-dimensional channel quantization while steering the transmit beamformer and RIS phases toward coherent combining. A key component is a deterministic geometric RIS update that reconstructs each phase correction from RSS-only probes. Under cascaded Rayleigh fading and channel-independent probing phases, we proved aggregate dominance and reference-direction stability with explicit $O(\log N/\sqrt{N})$ rates, established a global perturbation bound, and derived a received-power-ratio limit under vanishing quantization, direction, and normalized measurement errors. We also extended the framework to $b$-bit RIS phase shifters, yielding the closed-form gap $\cos^2(\pi/2^L+\pi/2^b)$, and to the multi-user downlink, where user-wise RSS feedback improves power balancing and spectral efficiency without multi-user cascaded-channel estimates. The proposed RSS-only adaptation approaches the full-CSI AO benchmark and outperforms channel-estimation-based limited-feedback baselines.

\appendices

\section{Proof of Lemma~\ref{lem:tail}}
\label{App:A}

Let $R_1=|X|$ and $R_2=|Y|$. Since $X,Y\sim\mathcal{CN}(0,1)$, the squared magnitudes $R_1^2$ and $R_2^2$ are independent exponential random variables with unit mean. Hence, for every $u\ge0$,
\begin{equation}
\Pr(R_i\ge u)=\Pr(R_i^2\ge u^2)=e^{-u^2},\qquad i=1,2.
\end{equation}
Now, fix $t>0$. If $R_1R_2\ge t$, then both $R_1$ and $R_2$ cannot be smaller than $\sqrt t$; otherwise, $R_1<\sqrt t$ and $R_2<\sqrt t$ would imply $R_1R_2<t$. Therefore,
\begin{equation}
\{R_1R_2\ge t\}
\subseteq
\{R_1\ge\sqrt t\}\cup\{R_2\ge\sqrt t\}.
\end{equation}
Using the union bound and the Rayleigh tail above,
\begin{align}
\Pr(|Z|\ge t)
&=\Pr(R_1R_2\ge t) \nonumber\\
&\le \Pr(R_1\ge\sqrt t)+\Pr(R_2\ge\sqrt t) \nonumber\\
&=2e^{-t}.
\end{align}
This proves the lemma.

\section{Proof of Lemma~\ref{lem:smallball}}
\label{App:B}
For $t\in\mathbb{C}$, define the two-dimensional characteristic function
\begin{equation}
\varphi_Z(t)=\mathbb{E}\left[e^{j\mathrm{Re}(\bar t Z)}\right].
\end{equation}
Conditioned on $X$, the product $Z=XY$ is complex Gaussian with distribution
$\mathcal{CN}(0,|X|^2)$. Write $t=t_{\mathrm r}+jt_{\mathrm i}$ and
$Z=Z_{\mathrm r}+jZ_{\mathrm i}$. Then, conditioned on $X$, the real and
imaginary parts $Z_{\mathrm r}$ and $Z_{\mathrm i}$ are independent
$\mathcal{N}(0,|X|^2/2)$ random variables, and
\begin{equation}
\mathrm{Re}(\bar t Z)=t_{\mathrm r}Z_{\mathrm r}+t_{\mathrm i}Z_{\mathrm i}.
\end{equation}
Using the characteristic function of a real Gaussian random variable,
\begin{align}
\mathbb{E}\left[e^{j\mathrm{Re}(\bar t Z)}\mid X\right]
&=
\exp\left(-\frac{|X|^2t_{\mathrm r}^2}{4}\right)
\exp\left(-\frac{|X|^2t_{\mathrm i}^2}{4}\right) \nonumber\\
&=
\exp\left(-\frac{|X|^2|t|^2}{4}\right).
\end{align}
Finally, since $|X|^2\sim\mathrm{Exp}(1)$,
\begin{align}
\varphi_Z(t)
&=
\int_0^\infty
\exp\left(-\frac{x|t|^2}{4}\right)e^{-x}\,\mathrm{d}x \nonumber\\
&=
\int_0^\infty
\exp\left[-x\left(1+\frac{|t|^2}{4}\right)\right]\,\mathrm{d}x \nonumber\\
&=
\frac{1}{1+|t|^2/4}.
\end{align}
Since $Z_m$ is circularly symmetric, multiplying it by the deterministic phase
$e^{j\phi_m}$ does not change its distribution. Hence,
$W_M\stackrel{d}{=}\sum_{m=1}^M Z_m$. Define the normalized sum
$U_M=W_M/\sqrt M$. By independence,
\begin{equation}
\varphi_{U_M}(t)
=
\left(\varphi_Z\left(\frac{t}{\sqrt M}\right)\right)^M
=
\left(1+\frac{|t|^2}{4M}\right)^{-M}.
\end{equation}
For $M\ge2$, this characteristic function is integrable over $\mathbb{R}^2$.
Therefore, by Fourier inversion, $U_M$ has a bounded continuous density
$f_{U_M}$, and
\begin{equation}
\|f_{U_M}\|_\infty
\le
\frac{1}{(2\pi)^2}
\int_{\mathbb{R}^2}
\left(1+\frac{|t|^2}{4M}\right)^{-M}\mathrm{d}t,
\end{equation}
where $\|f\|_\infty=\sup_{x\in\mathbb{R}^2}|f(x)|$.
Using polar coordinates, $t=\rho(\cos\psi,\sin\psi)$, we obtain
\begin{align}
\int_{\mathbb{R}^2}
\left(1+\frac{|t|^2}{4M}\right)^{-M}dt
&=
2\pi\int_0^\infty
\rho\left(1+\frac{\rho^2}{4M}\right)^{-M}d\rho .
\end{align}
Now, we set
\begin{equation}
s=1+\frac{\rho^2}{4M},
\qquad
ds=\frac{\rho}{2M}\,d\rho .
\end{equation}
Then
\begin{align}
2\pi\int_0^\infty
\rho\left(1+\frac{\rho^2}{4M}\right)^{-M}d\rho
&=
4\pi M\int_1^\infty s^{-M}ds \nonumber\\
&=
\frac{4\pi M}{M-1},
\end{align}
where the last equality holds for $M\ge2$. Therefore,
\begin{align}
\|f_{U_M}\|_\infty
&\le
\frac{1}{(2\pi)^2}\cdot\frac{4\pi M}{M-1} \nonumber\\
&=
\frac{M}{\pi(M-1)}
\le
\frac{2}{\pi},
\qquad M\ge2.
\end{align}
Finally, fix any $u\in\mathbb{C}$ and $r>0$. Since $U_M=W_M/\sqrt M$,
\begin{align}
\Pr(|W_M-u|\le r)
&=
\Pr\left(\left|U_M-\frac{u}{\sqrt M}\right|
\le
\frac{r}{\sqrt M}\right).
\end{align}
The event on the right is a disk in $\mathbb{R}^2$ with radius $r/\sqrt M$ and area
$\pi r^2/M$. Hence, using the density bound,
\begin{align}
\Pr(|W_M-u|\le r)
&\le
\|f_{U_M}\|_\infty
\frac{\pi r^2}{M} \nonumber\\
&\le
\frac{2}{\pi}\cdot\frac{\pi r^2}{M}
=
\frac{2r^2}{M}.
\end{align}
This proves the claim.

\section{Proof of Theorem~\ref{thm:dominance}}
\label{App:C}
For every $n$, the reverse triangle inequality applied to $A_n=S-z_ne^{j\theta_n}$ and $|z_ne^{j\theta_n}|=|z_n|$ gives $|A_n|\ge|S|-|z_n|\ge|S|-\max_k|z_k|$. Taking the minimum over $n$,
\begin{equation}
\label{eqn:dom-reduction}
\min_{1\le n\le N}|A_n|\ \ge\ |S|-\max_{1\le n\le N}|z_n|.
\end{equation}
 Applying Lemma~\ref{lem:smallball} with $M=N$, $u=0$, $r=c_\delta\sqrt N$, we get
\begin{equation}
    \Pr\bigl(|S|\le c_\delta\sqrt N\bigr)\le \frac{2(c_\delta\sqrt N)^2}{N}=2c_\delta^2=2\cdot\tfrac{\delta}{4}=\tfrac{\delta}{2},
\end{equation}
for $c_\delta=\tfrac12\sqrt\delta$. Hence, the event $\mathcal{E}_1=\{|S|\ge c_\delta\sqrt N\}$ has $\Pr(\mathcal{E}_1)\ge1-\delta/2$.
By Lemma~\ref{lem:tail}, $\Pr(|z_n|\ge t)\le2e^{-t}$ for each $n$. The union bound gives $\Pr(\max_n|z_n|\ge t)\le 2Ne^{-t}$. Setting $t=\log(4N/\delta)$ results in $2Ne^{-t}=2N\cdot\tfrac{\delta}{4N}=\tfrac{\delta}{2}$, so the event $\mathcal{E}_2=\{\max_n|z_n|\le\log(4N/\delta)\}$ has $\Pr(\mathcal{E}_2)\ge1-\delta/2$. By the union bound $\Pr(\mathcal{E}_1\cap\mathcal{E}_2)\ge1-\delta$. On $\mathcal{E}_1\cap\mathcal{E}_2$, \eqref{eqn:dom-reduction} gives $\min_n|A_n|\ge c_\delta\sqrt N-\log(4N/\delta)$ and the second inequality of \eqref{eqn:dom1} is exactly $\mathcal{E}_2$. If $c_\delta\sqrt N\ge2\log(4N/\delta)$, on the event $\mathcal{E}_1\cap\mathcal{E}_2$, we will have $\min_n|A_n|\ge c_\delta\sqrt N-\tfrac12 c_\delta\sqrt N=\tfrac{c_\delta}{2}\sqrt N$. Dividing by $\max_n|z_n|\le\log(4N/\delta)$ gives $\frac{\min_n |A_n|}{\max_n |z_n|}
    \ge
    \frac{c_\delta \sqrt{N}}
         {2\log\left(4N/\delta\right)}$.

\section{Proof of Theorem~\ref{thm:direction}}
\label{App:D}
Fix $n$ and consider the event $\mathcal{E}_1 \cap \mathcal{E}_2$ from Theorem~\ref{thm:dominance}, which occurs with probability at least $1-\delta$. Identify complex numbers with vectors in $\mathbb{R}^2$ and write $S = A_n + \boldsymbol{e}_n$, where $\boldsymbol{e}_n = z_n e^{j\theta_n}$. Then, we have $\|\boldsymbol{e}_n\| = |z_n|$ and $\|A_n\| = |A_n|$.
If $c_\delta \sqrt{N} \ge 3\log(4N/\delta)$, Theorem~\ref{thm:dominance} gives
\begin{align}
\frac{|z_n|}{|A_n|}
\le
\frac{\log(4N/\delta)}
     {c_\delta \sqrt{N} - \log(4N/\delta)}
&\le
\frac{\log(4N/\delta)}
     {3\log(4N/\delta) - \log(4N/\delta)} \nonumber\\
&=
\frac{1}{2}.
\end{align}
Therefore, $\|\boldsymbol{e}_n\| < \|A_n\|$ and applying Lemma~\ref{lem:angle} with $\boldsymbol{q}=A_n$ results in
\begin{equation}
|\angle S - \angle A_n|
\le
\arcsin\left(\frac{|z_n|}{|A_n|}\right)
\le
\frac{\pi}{2}\frac{|z_n|}{|A_n|}.
\end{equation}

Moreover, $c_\delta\sqrt N\ge3\log(4N/\delta)$ gives $|A_n|\ge c_\delta\sqrt N-\log(4N/\delta)\ge\tfrac{2}{3}c_\delta\sqrt N$, hence $|A_n|\ge\tfrac{c_\delta}{2}\sqrt N$, and with $|z_n|\le\log(4N/\delta)$,
\[
|\angle S-\angle A_n|\le\frac{\pi}{2}\cdot\frac{\log(4N/\delta)}{(c_\delta/2)\sqrt N}
=\frac{\pi}{c_\delta}\frac{\log(4N/\delta)}{\sqrt N}
\le\frac{4}{c_\delta}\frac{\log(4N/\delta)}{\sqrt N}.
\]
Taking the maximum over $n$ (the bound is uniform in $n$ on the same event) proves \eqref{eqn:dir}.

\section{Proof of Theorem~\ref{thm:lipschitz}}
\label{App:E}
Define $\boldsymbol{q}(d_1,d_2)=(-d_1,\,d_1-2d_2)\in\mathbb{R}^2$. Since $\operatorname{atan2}(y,x)$ is the angle of the vector $(x,y)$, we have $F(d_1,d_2)=\angle\boldsymbol{q}(d_1,d_2)$, and $\|\boldsymbol{q}^\star\|=\sqrt{(d_1^\star)^2+(d_1^\star-2d_2^\star)^2}=R^\star$. (The identity $R^\star=4|A_n||z_n|$ holds because, with $a=|A_n|$, $b=|z_n|$ and $\alpha$ the angle of element $n$ relative to $A_n$, the law of cosines gives $d_1^\star=-4ab\cos\alpha$ and $d_1^\star-2d_2^\star=4ab\sin\alpha$, so $\boldsymbol{q}^\star=4ab(\cos\alpha,\sin\alpha)$, where $\|\boldsymbol{q}^\star\|=4ab$ and $\angle\boldsymbol{q}^\star=\alpha=\alpha^\star$.) The measurement perturbation is $\widehat{\boldsymbol{q}}-\boldsymbol{q}^\star=(-\Delta_1,\ \Delta_1-2\Delta_2)$, and by $\sqrt{p^2+q^2}\le|p|+|q|$,
\begin{align}
\label{eqn:e-bound}
\|\widehat{\boldsymbol{q}}-\boldsymbol{q}^\star\|=\sqrt{\Delta_1^2+(\Delta_1-2\Delta_2)^2}&\le|\Delta_1|+|\Delta_1-2\Delta_2|\nonumber \\ &\le2(|\Delta_1|+|\Delta_2|).
\end{align}
The assumption $|\Delta_1|+|\Delta_2|\le R^\star/4$ then gives $\|\widehat{\boldsymbol{q}}-\boldsymbol{q}^\star\|\le R^\star/2<\|\boldsymbol{q}^\star\|$. Therefore, Lemma~\ref{lem:angle} (with $\boldsymbol{q}=\boldsymbol{q}^\star$, $\boldsymbol{e}=\widehat{\boldsymbol{q}}-\boldsymbol{q}^\star$) results in
\begin{align}
\label{eqn:est-arcsin}
|\widehat\alpha-\alpha^\star|&\le\arcsin\Bigl(\frac{\|\widehat{\boldsymbol{q}}-\boldsymbol{q}^\star\|}{R^\star}\Bigr)\le\frac{\pi}{2}\frac{\|\widehat{\boldsymbol{q}}-\boldsymbol{q}^\star\|}{R^\star}\nonumber \\ &\le\frac{\pi(|\Delta_1|+|\Delta_2|)}{R^\star}\le\frac{4(|\Delta_1|+|\Delta_2|)}{R^\star},
\end{align}
which is the same as \eqref{eqn:lip}.

To prove the second part, fix any $x\in(0,\pi]$ and define
$D=|\Delta_1|+|\Delta_2|$ and $H=\{D\le R^\star/4\}$. Then,
\[
\{|\widehat\alpha-\alpha^\star|>x\}
=
\left(\{|\widehat\alpha-\alpha^\star|>x\}\cap H\right)
\cup
\left(\{|\widehat\alpha-\alpha^\star|>x\}\cap H^c\right),
\]
so we can bound the two cases separately. From \eqref{eqn:lip},
\begin{equation}
\begin{aligned}
\left\{|\widehat\alpha-\alpha^\star|>x\right\}
\subseteq
&
\underbrace{
\left\{
D>\frac{xR^\star}{4}
\right\}
}_{\text{small-perturbation region}}
\cup
\underbrace{
\left\{
D>\frac{R^\star}{4}
\right\}
}_{\text{escape event}} .
\end{aligned}
\end{equation}
On $H$, \eqref{eqn:lip} gives $|\widehat\alpha-\alpha^\star|\le 4D/R^\star$, so $|\widehat\alpha-\alpha^\star|>x$ implies $D>xR^\star/4$. On $H^c$, we directly have $D>R^\star/4$, which is the escape event.
Using $\{|a|+|b|>u\}\subseteq\{|a|>u/2\}\cup\{|b|>u/2\}$ and the tail bound $\Pr(|\Delta_i|>u)\le 2\exp(-u^2/(2\sigma^2))$, we obtain
\begin{equation}
\begin{aligned}
\Pr\left(
|\Delta_1|+|\Delta_2|>\frac{xR^\star}{4}
\right)
&\le
4\exp\left(
-\frac{(xR^\star/8)^2}{2\sigma^2}
\right)
\\
&=
4\exp\left(
-\frac{x^2(R^\star)^2}{128\sigma^2}
\right).
\end{aligned}
\end{equation}
Similarly,
\begin{equation}
\Pr\left(
|\Delta_1|+|\Delta_2|>\frac{R^\star}{4}
\right)
\le
4\exp\left(
-\frac{(R^\star)^2}{128\sigma^2}
\right).
\end{equation}
Adding the two bounds gives
\begin{equation}
\Pr\left(|\widehat\alpha-\alpha^\star|>x\right)
\le
4\exp\left(
-\frac{x^2(R^\star)^2}{128\sigma^2}
\right)
+
4\exp\left(
-\frac{(R^\star)^2}{128\sigma^2}
\right),
\end{equation}
which proves the claim.

\bibliography{journalAbbreviations, mybibliography2}
\bibliographystyle{ieeetr}
\end{document}